%
\documentclass[draftcls,onecolumn]{IEEEtran}
\usepackage{amsmath}
\usepackage{amssymb}
\usepackage{amsfonts}
\usepackage{slashbox}
\usepackage{graphicx}
\usepackage{cite, amsmath, amsfonts, amssymb, psfrag, epsfig,tikz}
\usetikzlibrary{shapes,shadows,arrows}
\interdisplaylinepenalty=2500

\newtheorem{proposition}{{Proposition}}
\newtheorem{definition}{{Definition}}
\newtheorem{theorem}{{Theorem}}

\newtheorem{corollary}{{Corollary}}

 \hyphenation{}
\DeclareMathAlphabet{\mathpzc}{OT1}{pzc}{m}{it}

\begin{document}

\title{On the Reliability Function of Variable-Rate Slepian-Wolf Coding}


\author{Jun Chen, Da-ke He, Ashish Jagmohan, Luis A. Lastras-Monta\~{n}o
}

\maketitle

\begin{abstract}
The reliability function of variable-rate Slepian-Wolf coding is
linked to the reliability function of channel coding with constant
composition codes, through which computable lower and upper bounds
are derived. The bounds coincide at rates close to the
Slepian-Wolf limit, yielding a complete characterization of the
reliability function in that rate regime. It is shown that
variable-rate Slepian-Wolf codes can significantly outperform
fixed-rate Slepian-Wolf codes in terms of rate-error tradeoff. The
reliability function of variable-rate Slepian-Wolf coding with
rate below the Slepian-Wolf limit is determined. In sharp contrast
with fixed-rate Slepian-Wolf codes for which the correct decoding
probability decays to zero exponentially fast if the rate is below
the Slepian-Wolf limit, the correct decoding probability of
variable-rate Slepian-Wolf codes can be bounded away from zero.
\end{abstract}

\begin{keywords}
Channel coding, duality, reliability function, Slepian-Wolf
coding.
\end{keywords}

\section{Introduction}


Consider the problem (see Fig. 1) of compressing
$X^n=(X_1,X_2,\cdots,X_n)$ with side information
$Y^n=(Y_1,Y_2,\cdots,Y_n)$ available only at the decoder. Here
$\{(X_i,Y_i)\}_{i=1}^\infty$ is a joint memoryless source with
zero-order joint probability distribution $P_{XY}$ on finite
alphabet $\mathcal{X}\times\mathcal{Y}$. Let $P_X$ and $P_Y$ be
the marginal probability distributions of $X$ and $Y$ induced by
the joint probability distribution $P_{XY}$. Without loss of
generality, we shall assume $P_X(x)>0, P_Y(y)>0$ for all
$x\in\mathcal{X},y\in\mathcal{Y}$. This problem was first studied
by Slepian and Wolf in their landmark paper \cite{SW73}. They
proved a surprising result that the minimum rate for
reconstructing $X^n$ at the decoder with asymptotically zero error
probability (as block length $n$ goes to infinity) is $H(X|Y)$,
which is the same as the case where the side information $Y^n$ is
also available at the encoder. The fundamental limit $H(X|Y)$ is
often referred to as the Slepian-Wolf limit. We shall assume
$H(X|Y)>0$ throughout this paper.

\begin{figure}[hbt]
\centering
\begin{psfrags}
\psfrag{x}[c]{$X^n$}%
\psfrag{y}[c]{$Y^n$}%
\psfrag{r1}[c]{$R$}%
\psfrag{xhat}[c]{$\widehat{X}^n$}%
\psfrag{yhat}[c]{$\widehat{Y}^n$}%
\psfrag{en1}[c]{Encoder}%
\psfrag{de}[c]{Decoder}%
\includegraphics[scale=1.2]{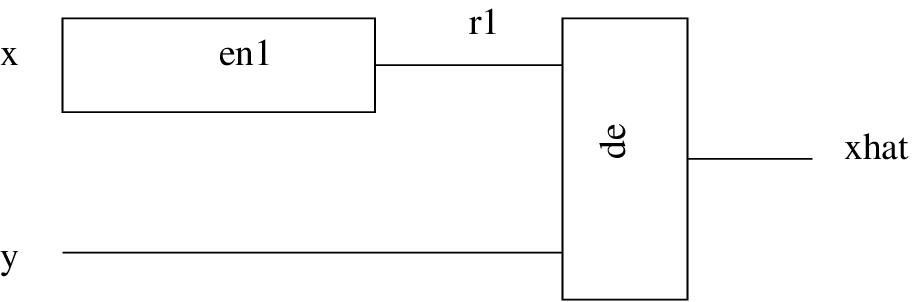}
\caption{Slepian-Wolf coding}
\end{psfrags}
\end{figure}

Different from conventional lossless source coding, where most
effort has been devoted to variable-rate coding schemes, research
on Slepian-Wolf coding has almost exclusively focused on
fixed-rate codes. This phenomenon can be partly explained by the
influence of channel coding. It is well known that there is an
intimate connection between channel coding and Slepian-Wolf
coding. Intuitively, one may view $Y^n$ as the channel output
generated by channel input $X^n$ through discrete memoryless
channel $P_{Y|X}$, where $P_{Y|X}$ is the probability transition
matrix from $\mathcal{X}$ to $\mathcal{Y}$ induced by the joint
probability probability distribution $P_{XY}$. Since $Y^n$ is not
available at the encoder, Slepian-Wolf coding is, in a certain
sense, similar to channel coding without feedback. In a channel
coding system, there is little incentive to use variable-rate
coding schemes if no feedback link exists from the receiver to the
transmitter. Therefore, it seems justifiable to focus on
fixed-rate codes in Slepian-Wolf coding.

This viewpoint turns out to be misleading. We shall show that
variable-rate Slepian-Wolf codes can significantly outperform
fixed-rate codes in terms of rate-error tradeoff. Specifically, it
is revealed that variable-rate Slepian-Wolf codes can beat the
sphere-packing bound for fixed-rate Slepian-Wolf codes at rates
close to the Slepian-Wolf limit\footnote{Note that the same
conclusion is trivially true if the rate is greater than $H(X)$
since in this case one can achieve zero error probability using
variable-rate coding schemes.}. It is known that the correct
decoding probability of fixed-rate Slepian-Wolf codes decays to
zero exponentially fast if the rate is below the Slepian-Wolf
limit. Somewhat surprisingly, the decoding error probability of
variable-rate Slepian-Wolf codes can be bounded away from one even
when they are operated below the Slepian-Wolf limit, and the
performance degrades graciously as the rate goes to zero.
Therefore, variable-rate Slepian-Wolf coding is considerably more
robust.

The rest of this paper is organized as follows. In Section
\ref{fixedrate}, we review the existing bounds on the reliability
function of fixed-rate Slepian-Wolf coding, and point out the
intimate connections with their counterparts in channel coding. In
Section \ref{variablerate}, we characterize the reliability
function of variable-rate Slepian-Wolf coding by leveraging the
reliability function of channel coding with constant composition
codes. Computable lower and upper bounds are derived. The bounds
coincide at rates close to the Slepian-Wolf limit. The correct
decoding probability of variable-rate Slepian-Wolf coding with
rate below the Slepian-Wolf limit is studied in Section
\ref{below}. An illustrative example is given in Section
\ref{example}. We conclude the paper in Section \ref{conclusion}.
Throughout this paper, we assume the logarithm function is to base
$e$ unless specified otherwise.


\section{Fixed-Rate Slepian-Wolf Coding and Channel Coding}\label{fixedrate}

To facilitate the comparisons between the performances of
fixed-rate Slepian-Wolf coding and variable-rate coding, we shall
briefly review the existing bounds on the reliability function of
fixed-rate Slepian-Wolf coding. It turns out that a most
instructive way is to first consider their counterparts in channel
coding. The reason is two-fold. First, it provides the setup to
introduce several important definitions. Second and more
important, it will be clear that the reliability function of
fixed-rate Slepian-Wolf coding is closely related to that of
channel coding; indeed, such a connection will be further explored
in the context of variable-rate Slepian-Wolf coding.

For any probability distributions $P, Q$ on $\mathcal{X}$ and
probability transition matrices
$V,W:\mathcal{X}\rightarrow\mathcal{Y}$, we use $H(P)$, $I(P,V)$,
$D(Q\|P)$, and $D(W\|V|P)$ to denote the standard entropy, mutual
information, divergence, and conditional divergence functions;
specifically, we have
\begin{eqnarray*}
&&H(P)=-\sum\limits_{x}P(x)\log P(x),\\
&&I(P,V)=\sum\limits_{x,y}P(x)V(y|x)\log\frac{V(y|x)}{\sum_{x'}P(x')V(y|x')},\\
&&D(Q\|P)=\sum\limits_{x}Q(x)\log\frac{Q(x)}{P(x)},\\
&&D(W\|V|P)=\sum\limits_{x,y}P(x)W(y|x)\log\frac{W(y|x)}{V(y|x)}.
\end{eqnarray*}

The main technical tool we need is the method of types. First, we
shall quote a few basic definitions from \cite{CK81B}. Let
$\mathcal{P}(\mathcal{X})$ denote the set of all probability
distributions on $\mathcal{X}$. The type of a sequence
$x^n\in\mathcal{X}^n$, denoted as $P_{x^n}$, is the empirical
probability distribution of $x^n$. Let
$\mathcal{P}_n(\mathcal{X})$ denote the set consisting of the
possible types of sequences $x^n\in\mathcal{X}^n$. For any
$P\in\mathcal{P}_n(\mathcal{X})$, the type class
$\mathcal{T}_n(P)$ is the set of sequences in $\mathcal{X}^n$ of
type $P$. We will make frequent use of the following elementary
results:
\begin{eqnarray}
&&|\mathcal{P}_n(\mathcal{X})|\leq (n+1)^{|\mathcal{X}|},\label{cardinality}\\
&&\frac{1}{(n+1)^{|\mathcal{X}|}}e^{nH(P)}\leq
|\mathcal{T}_n(P)|\leq e^{nH(P)},\quad
P\in\mathcal{P}_n(\mathcal{X}),\label{typeclass}\\
&&\prod\limits_{i=1}^nP(x_i)=e^{-n\left[D(Q\|P)+H(Q)\right]},\quad
x^n\in\mathcal{T}_n(Q),
Q\in\mathcal{P}_n(\mathcal{X}),P\in\mathcal{P}(\mathcal{X}).\label{typeprobability}
\end{eqnarray}

A block code\footnote{More precisely, a block code $\mathcal{C}_n$
is an ordered collection of sequences in $\mathcal{X}^n$. We allow
$\mathcal{C}_n$ to contain identical sequences. Moreover, for any
set $\mathcal{A}\subseteq\mathcal{X}^n$, we say
$\mathcal{C}_n\subseteq\mathcal{A}_n$ if $x^n\in\mathcal{A}$ for
all $x^n\in\mathcal{C}_n$. Note that
$\mathcal{C}_n\subseteq\mathcal{A}$ does not imply
$|\mathcal{C}_n|\leq|\mathcal{A}|$.} $\mathcal{C}_n$ is a set of
sequences in $\mathcal{X}^n$. The rate of $\mathcal{C}_n$ is
defined as
\begin{eqnarray*}
R(\mathcal{C}_n)=\frac{1}{n}\log|\mathcal{C}^n|.
\end{eqnarray*}
Given a channel $W_{Y|X}:\mathcal{X}\rightarrow\mathcal{Y}$, a
block code $\mathcal{C}_n\subseteq\mathcal{X}^n$, and channel
output $Y^n\in\mathcal{Y}^n$, the output of the optimal
\textit{maximum likelihood} (ML) decoder is
\begin{eqnarray*}
\widehat{X}^n=\arg\min\limits_{x^n\in\mathcal{C}_n}-\sum\limits_{i=1}^n\log
W_{Y|X}(Y_i|x_i),
\end{eqnarray*}
where the ties are broken in an arbitrary manner. The average
decoding error probability of block code $\mathcal{C}_n$ over
channel $W_{Y|X}$ is defined as
\begin{eqnarray*}
P_e(\mathcal{C}_n,W_{Y|X})=\frac{1}{|\mathcal{C}_n|}\sum\limits_{x^n\in\mathcal{C}_n}\mbox{Pr}\{\widehat{X}^n\neq
x^n | x^n\mbox{  is transmitted}\}.
\end{eqnarray*}
The maximum decoding error probability of block code
$\mathcal{C}_n$ over channel $W_{Y|X}$ is defined as
\begin{eqnarray*}
P_{e,\max}(\mathcal{C}_n,W_{Y|X})=\max\limits_{x^n\in\mathcal{C}_n}\mbox{Pr}\{\widehat{X}^n\neq
x^n | x^n\mbox{  is transmitted}\}.
\end{eqnarray*}
The average correct decoding probability of block code
$\mathcal{C}_n$ over channel $W_{Y|X}$ is defined as
\begin{eqnarray*}
P_c(\mathcal{C}_n,W_{Y|X})=1-P_e(\mathcal{C}_n,W_{Y|X}).
\end{eqnarray*}

\begin{definition}\label{def:blockcode}
Given a channel $W_{Y|X}:\mathcal{X}\rightarrow\mathcal{Y}$, we
say an error exponent $E\geq 0$ is achievable with block codes at
rate $R$ if for any $\delta>0$, there exists a sequence of block
codes codes $\{\mathcal{C}_n\}$ such that
\begin{eqnarray}
&&\liminf\limits_{n\rightarrow\infty}R(\mathcal{C}_n)\geq R-\delta,\nonumber\\
&&\limsup\limits_{n\rightarrow\infty}-\frac{1}{n}\log
P_e(\mathcal{C}_n,W_{Y|X})\geq E-\delta.\label{averageerror}
\end{eqnarray}
The largest achievable error exponent at rate $R$ is denoted by
$E(W_{Y|X},R)$. The function $E(W_{Y|X},\cdot)$ is referred to as
the reliability function of channel $W_{Y|X}$. Similarly, we say a
correct decoding exponent $E^c\geq 0$ is achievable with block
channel codes at rate $R$ if for any $\delta>0$, there exists a
sequence of block codes $\{\mathcal{C}_n\}$ such that
\begin{eqnarray*}
&&\liminf\limits_{n\rightarrow\infty}R(\mathcal{C}_n)\geq R-\delta,\\
&&\liminf\limits_{n\rightarrow\infty}-\frac{1}{n}\log
P_c(\mathcal{C}_n,W_{Y|X})\leq E^c+\delta.
\end{eqnarray*}
The smallest achievable correct decoding exponent at rate $R$ is
denoted by $E^c(W_{Y|X},R)$. It will be seen that $E^c(W_{Y|X},R)$
is positive if and only if $R>C(W_{Y|X})$, where
$C(W_{Y|X})\triangleq\max_{Q_X}I(Q_X,W_{Y|X})$ is the capacity of
channel $W_{Y|X}$. Therefore, we shall refer to the function
$E^c(W_{Y|X},\cdot)$ as the reliability function of channel
$W_{Y|X}$ above the capacity.
\end{definition}
Remark: Given any block code $\mathcal{C}_n$ of average decoding
error probability $P_e(\mathcal{C}_n,W_{Y|X})$, we can expurgate
the worst half of the codewords so that the maximum decoding error
probability of the resulting code is bounded above by
$2P_e(\mathcal{C}_n,W_{Y|X})$. Therefore, the reliability function
$E(W_{Y|X},\cdot)$ is unaffected if we replace
$P_e(\mathcal{C}_n,W_{Y|X})$ by
$P_{e,\max}(\mathcal{C}_n,W_{Y|X})$ in (\ref{averageerror}).

\begin{definition}\label{def:constantcomposition}
Given a probability distribution $Q_X\in\mathcal{P}(\mathcal{X})$
and a channel $W_{Y|X}:\mathcal{X}\rightarrow\mathcal{Y}$, we say
an error exponent $E\geq 0$ is achievable at rate $R$ with
constant composition codes of type approximately $Q_X$ if for any
$\delta>0$, there exists a sequence of block codes codes
$\{\mathcal{C}_n\}$ with
$\mathcal{C}_n\subseteq\mathcal{T}_n(P_n)$ for some
$P_n\in\mathcal{P}_n(\mathcal{X})$ such that
\begin{eqnarray*}
&&\lim\limits_{n\rightarrow\infty}\|P_n-Q_X\|=0,\\
&&\liminf\limits_{n\rightarrow\infty}R(\mathcal{C}_n)\geq R-\delta,\\
&&\limsup\limits_{n\rightarrow\infty}-\frac{1}{n}\log
P_e(\mathcal{C}_n,W_{Y|X})\geq E-\delta,
\end{eqnarray*}
where $\|\cdot\|$ is the $l_1$ norm. The largest achievable error
exponent at rate $R$ for constant composition codes of type
approximately $Q_X$  is denoted by $E(Q_X,W_{Y|X},R)$. The
function $E(Q_X,W_{Y|X},\cdot)$ is referred to as the reliability
function of channel $W_{Y|X}$ for constant composition codes of
type approximately $Q_X$. Similarly, we say a correct decoding
exponent $E^c\geq 0$ is achievable at rate $R$ with constant
composition codes of type approximately $Q_X$  if for any
$\delta>0$, there exists a sequence of block codes
$\{\mathcal{C}_n\}$ with
$\mathcal{C}_n\subseteq\mathcal{T}_n(P_n)$ for some
$P_n\in\mathcal{P}_n(\mathcal{X})$ such that
\begin{eqnarray}
&&\lim\limits_{n\rightarrow\infty}\|P_n-Q_X\|=0,\nonumber\\
&&\liminf\limits_{n\rightarrow\infty}R(\mathcal{C}_n)\geq R-\delta,\nonumber\\
&&\liminf\limits_{n\rightarrow\infty}-\frac{1}{n}\log
P_c(\mathcal{C}_n,W_{Y|X})\leq E^c+\delta.\label{average2}
\end{eqnarray}
The smallest achievable correct decoding exponent at rate $R$ for
constant composition codes of type approximately $Q_X$ is denoted
by $E^c(Q_X,W_{Y|X},R)$.
\end{definition}
Remark: The reliability function $E(Q_X,W_{Y|X},\cdot)$ is
unaffected if we replace $P_e(\mathcal{C}_n,W_{Y|X})$ by
$P_{e,\max}(\mathcal{C}_n,W_{Y|X})$ in (\ref{average2}).

Let $|t|^+=\max\{0,t\}$ and
$d_{W_{Y|X}}(x,\widetilde{x})=-\log\sum_{y}\sqrt{W_{Y|X}(y|x)W_{Y|X}(y|\widetilde{x})}$.
Define
\begin{eqnarray}
&&E_{ex}(Q_{X},W_{Y|X},R)\nonumber\\
&&\hspace{0.3in}=\min\limits_{Q_{\widetilde{X}|X}:Q_X=Q_{\widetilde{X}},I(Q_X,Q_{\widetilde{X}|X})\leq
R}\left[\mathbb{E}_{Q_{X\widetilde{X}}}d_{W_{Y|X}}(X,\widetilde{X})+I(Q_X,Q_{\widetilde{X}|X})-R\right],\label{constex}\\
&&E_{rc}(Q_{X},W_{Y|X},R)=\min\limits_{V_{Y|X}}\left[D(V_{Y|X}\|W_{Y|X}|Q_{X})+|I(Q_X,V_{Y|X})-R|^+\right],\label{constrc}\\
&&E_{sp}(Q_X,W_{Y|X},R)=\min\limits_{V_{Y|X}:I(Q_X,V_{Y|X})\leq
R}D(V_{Y|X}\|W_{Y|X}|Q_X),\label{constsp}
\end{eqnarray}
where in (\ref{constex}), $Q_{\widetilde{X}}$ and
$Q_{X\widetilde{X}}$ are respectively the marginal probability
distribution of $\widetilde{X}$ and the joint probability
distribution of $X$ and $\widetilde{X}$ induced by $Q_X$ and
$Q_{\widetilde{X}|X}$.

Let $R^{\infty}_{ex}(Q_X,W_{Y|X})$ be the smallest $R\geq 0$ with
$E_{ex}(Q_{X},W_{Y|X},R)<\infty$. We have
\begin{eqnarray}
R^{\infty}_{ex}(Q_X,W_{Y|X})=\min\limits_{Q_{\widetilde{X}|X}:Q_X=Q_{\widetilde{X}},\mathbb{E}_{Q_{X\widetilde{X}}}d_{W_{Y|X}}(X,\widetilde{X})<\infty}I(Q_X,Q_{\widetilde{X}|X}).\label{Rexinfty}
\end{eqnarray}
It is known \cite[Excercise 5.18]{CK81B} that
$E_{ex}(Q_{X},W_{Y|X},R)$ is a decreasing convex function of $R$
for $R\geq R^{\infty}_{ex}(Q_X,W_{Y|X})$; moreover, the minimum in
(\ref{Rexinfty}) is achieved at $Q_{X\widetilde{X}}$ if and only
if
\begin{eqnarray*}
Q_{X\widetilde{X}}(x,\widetilde{x})=\left\{\begin{array}{ll}
  cQ(x)Q(\widetilde{x})&\mbox{if } d_{W_{Y|X}}(x,\widetilde{x})<\infty,\\
  0
  &\mbox{otherwise},
\end{array}\right.&
\end{eqnarray*}
where the probability distribution $Q$ and the constant $c$ are
uniquely determined by the condition $Q_X=Q_{\widetilde{X}}$.

It is shown in \cite[Lemma 3]{CK81} that, for some
$R^*(Q_{X},W_{Y|X})\in[0, I(Q_{X},W_{Y|X})]$, we have
\begin{eqnarray}
\max\left\{E_{ex}(Q_{X},W_{Y|X},R),E_{rc}(Q_{X},P_{Y|X},R)\right\}=\left\{\begin{array}{ll}
  E_{ex}(Q_{X},W_{Y|X},R)&\mbox{if } R\leq R^*(Q_{X},W_{Y|X}),\\
  E_{rc}(Q_{X},W_{Y|X},R)
  &\mbox{if } R>R^*(Q_{X},W_{Y|X}).
\end{array}\right.&\label{rcex}
\end{eqnarray}
It is also known \cite[Corollary 5.4]{CK81B} that
\begin{eqnarray}
E_{rc}(Q_X,W_{Y|X},R)=\left\{\begin{array}{ll}
  E_{sp}(Q_X,W_{Y|X},R)&\mbox{if } R\geq R_{cr}(Q_{X},W_{Y|X}),\\
  E_{sp}(Q_X,W_{Y|X},R_{cr})+R_{cr}-R
  &\mbox{if } 0\leq R\leq R_{cr}(Q_{X},W_{Y|X}),
\end{array}\right.&\label{rcsp}
\end{eqnarray}
where $R_{cr}\triangleq R_{cr}(Q_{X},W_{Y|X})$ is the smallest $R$
at which the convex curve $E_{sp}(Q_{X},W_{Y|X},R)$ meets its
supporting line of slope -1. It is obvious that
$R_{cr}(Q_{X},W_{Y|X})\leq I(Q_{X},W_{Y|X})$.

\begin{proposition}\label{prop:cr}
$R_{cr}(Q_X,W_{Y|X})=I(Q_X,W_{Y|X})$ if and only if for all $x, y$
such that $Q_X(x)W_{Y|X}(y|x)>0$, the value of
\begin{eqnarray*}
\log\frac{W_{Y|X}(y|x)}{\sum_{x'}Q_X(x')W_{Y|X}(y|x')}
\end{eqnarray*}
does not depend on $y$.
\end{proposition}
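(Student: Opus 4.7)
The plan is to apply convex duality to the sphere-packing optimization, reduce the identity $R_{cr}(Q_X,W_{Y|X})=I(Q_X,W_{Y|X})$ to a KKT condition at $V=W_{Y|X}$ in a single convex minimization, and then read the proposition's condition off the first-order equations. Introduce the dual
\[
\phi(\lambda)\;=\;\min_{V_{Y|X}}\bigl[D(V_{Y|X}\|W_{Y|X}|Q_X)+\lambda\,I(Q_X,V_{Y|X})\bigr],\qquad\lambda\ge 0.
\]
Since $D(V\|W|Q_X)$ and $I(Q_X,V)$ are both convex in $V$ for fixed $Q_X$ (the latter is the standard convexity of mutual information in the channel for a fixed input, not to be confused with its concavity in the input), the bracketed objective is convex in $V$, while $\phi$ is concave in $\lambda$ as an infimum of affine functions. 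The choice $V=W_{Y|X}$ attains the value $\lambda I$ with $I:=I(Q_X,W_{Y|X})$, so $\phi(0)=0$, $\phi'(0^+)=I$, and $\phi(\lambda)\le\lambda I$ for every $\lambda\ge 0$. Strong duality yields $E_{sp}(Q_X,W_{Y|X},R)=\sup_{\lambda\ge 0}[\phi(\lambda)-\lambda R]$, and the subdifferential correspondence under this Legendre transform identifies $R_{cr}(Q_X,W_{Y|X})$ with the right derivative $\phi'(1^+)$.

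I would then prove both directions using the identity
\[
\frac{\partial}{\partial V(y|x)}\bigl[D(V\|W_{Y|X}|Q_X)+\lambda\,I(Q_X,V)\bigr]\bigg|_{V=W_{Y|X}}\;=\;Q_X(x)\Bigl[1+\lambda\log\frac{W_{Y|X}(y|x)}{P_Y(y)}\Bigr],
\]
with $P_Y(y)=\sum_{x'}Q_X(x')W_{Y|X}(y|x')$. For the \emph{if} direction, assume the condition of the proposition: then the above derivative is independent of $y$ over $\{y:W_{Y|X}(y|x)>0\}$ for every $x$ with $Q_X(x)>0$ and every $\lambda\ge 0$. Together with the boundary constraint $V(y|x)=0$ whenever $W_{Y|X}(y|x)=0$ (forced by finiteness of $D$), this is the KKT stationarity condition on the product of simplices $\{V(\cdot|x)\in\Delta(\mathcal{Y})\}$. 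Convexity of the objective in $V$ makes $V=W_{Y|X}$ a global minimizer for every $\lambda\ge 0$, so $\phi(\lambda)\equiv\lambda I$ on $[0,\infty)$ and hence $R_{cr}=\phi'(1^+)=I$.

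For the \emph{only if} direction, $R_{cr}=I$ means $\phi'(1^+)=I$; combined with $\phi'(0^+)=I$ and the nonincreasing character of $\phi'$ on $[0,\infty)$ (concavity), this forces $\phi(\lambda)=\lambda I$ on $[0,1]$, so $\phi(1)=I$. Because $V=W_{Y|X}$ already attains the value $I$ at $\lambda=1$, it must be a minimizer, and KKT stationarity at $V=W_{Y|X}$ for $\lambda=1$ is therefore necessary. Reading off the identity above with $\lambda=1$ and cancelling the positive factor $Q_X(x)$ and the constant $1$ yields exactly that $\log W_{Y|X}(y|x)/P_Y(y)$ is independent of $y$ over the support of $W_{Y|X}(\cdot|x)$ for each $x$ with $Q_X(x)>0$, closing the equivalence. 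The main technical hazard is the Legendre/subdifferential bookkeeping identifying $R_{cr}$ with $\phi'(1^+)$ and handling the possible non-smoothness of $\phi$ at $\lambda=1$; the KKT calculations themselves are routine differentiation.
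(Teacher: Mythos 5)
Your proof is essentially correct and its computational core coincides with the paper's: both arguments ultimately reduce the question to whether $V_{Y|X}=W_{Y|X}$ minimizes the convex functional $D(V_{Y|X}\|W_{Y|X}|Q_X)+I(Q_X,V_{Y|X})$, and then read the proposition's condition off the first-order (KKT) conditions at $V=W_{Y|X}$; your gradient identity $Q_X(x)\bigl[1+\lambda\log\frac{W_{Y|X}(y|x)}{P_Y(y)}\bigr]$ at $\lambda=1$ is exactly the stationarity condition the paper writes with explicit multipliers $\alpha(x,y),\beta(x)$. Where you differ is in how the definition of $R_{cr}$ is connected to that optimization: the paper gets the reduction in one line by invoking the definition of $E_{rc}$ together with the known relation (\ref{rcsp}) between $E_{rc}$ and $E_{sp}$ (Csisz\'ar--K\"orner, Corollary 5.4), noting that $E_{rc}(Q_X,W_{Y|X},0)=\min_V[D+I]$; you instead re-derive the link through the Lagrangian dual $\phi(\lambda)=\min_V[D+\lambda I]$ and the Legendre correspondence $R_{cr}=\phi'(1^+)$. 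Your route is self-contained with respect to (\ref{rcsp}) and, in the ``if'' direction, is arguably more robust (showing $W_{Y|X}$ is optimal for \emph{every} $\lambda$ gives $\phi(\lambda)=\lambda I$, which even yields $E_{sp}=\infty$ below $I$ by a weak-duality argument), but it buys this at the cost of the duality bookkeeping the paper simply inherits from the cited result.

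That bookkeeping is the one soft spot: the identification $R_{cr}=\phi'(1^+)$ and the claim $\phi'(0^+)=I$ are asserted, not proved, and the former is the hinge of both directions. Both are true, and they can be closed cheaply without any derivative or strong-duality argument: exchanging the two minimizations gives $\inf_{R\ge 0}[E_{sp}(Q_X,W_{Y|X},R)+R]=\phi(1)$, so the slope $-1$ supporting line is $\phi(1)-R$ and $R_{cr}$ is the smallest $R$ with $E_{sp}(R)+R=\phi(1)$; since $E_{sp}(I)=0$ and $\phi(1)\le I$ always, one checks that $R_{cr}=I$ iff $\phi(1)=I$ iff $W_{Y|X}$ is a minimizer (for the ``only touches at $R=I$'' part use either your all-$\lambda$ argument or the strict convexity of $D(\cdot\|W_{Y|X}|Q_X)$ on the support of $Q_X$, which makes the minimizer unique). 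Also note $\phi'(0^+)=I$ is dispensable in your ``only if'' step: concavity with $\phi(0)=0$ gives $\phi(1)\ge\phi'(1^+)=I$, and $\phi(1)\le I$ is automatic, so $\phi(1)=I$ follows directly. With those repairs your argument is a complete and valid alternative to the paper's.
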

\begin{proof}
See Appendix \ref{app:prop:cr}
\end{proof}

Define
$R^{\infty}_{sp}(Q_X,W_{Y|X})=\inf\{R>0:E_{sp}(Q_X,W_{Y|X},R)<\infty\}$.
It is known \cite[Excercise 5.3]{CK81B} that
\begin{eqnarray}
R^{\infty}_{sp}(Q_X,W_{Y|X})=\min I(Q_X,V_{Y|X}),\label{Rspinfty}
\end{eqnarray}
where the minimum is taken over those $V_{Y|X}$'s for which
$V_{Y|X}(y|x)=0$ whenever $W_{Y|X}(y|x)=0$; in particular,
$R^{\infty}_{sp}(Q_X,W_{Y|X})>0$ if and only if for every
$y\in\mathcal{Y}$ there exists an $x\in\mathcal{X}$ with
$Q_X(x)>0$ and $W_{Y|X}(y|x)=0$.

\begin{proposition}\label{prop:sp}
The minimum in (\ref{Rspinfty}) is achieved at $V_{Y|X}=W_{Y|X}$
if and only if the value of
\begin{eqnarray*}
\frac{W_{Y|X}(y|x)}{\sum_{x'} Q_{X}(x')W_{Y|X}(y|x')}
\end{eqnarray*}
does not depend on $y$ for all $x,y$ such that
$Q_X(x)W_{Y|X}(y|x)>0$.
\end{proposition}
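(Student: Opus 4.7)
The plan is to recognize the problem in (\ref{Rspinfty}) as a convex optimization and read off its first-order optimality condition at $V_{Y|X}=W_{Y|X}$. Since $I(Q_X,V_{Y|X})$ is a convex function of the channel $V_{Y|X}$ for fixed $Q_X$ (a standard consequence of the log-sum inequality), and the feasible set
\begin{eqnarray*}
\mathcal{V}\triangleq\left\{V_{Y|X}:V_{Y|X}(y|x)\geq 0,\ \sum\limits_y V_{Y|X}(y|x)=1\ \forall x,\ V_{Y|X}(y|x)=0\ \mbox{if}\ W_{Y|X}(y|x)=0\right\}
\end{eqnarray*}
is a convex polytope, the Karush-Kuhn-Tucker (KKT) conditions are both necessary and sufficient for $V_{Y|X}=W_{Y|X}$ to minimize $I(Q_X,V_{Y|X})$ over $\mathcal{V}$.

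A direct calculation yields
\begin{eqnarray*}
\frac{\partial I(Q_X,V_{Y|X})}{\partial V_{Y|X}(y|x)}=Q_X(x)\log\frac{V_{Y|X}(y|x)}{\sum_{x'}Q_X(x')V_{Y|X}(y|x')}.
\end{eqnarray*}
I would introduce Lagrange multipliers $\lambda_x$ for the row-sum equalities and $\mu_{xy}\geq 0$ for the nonnegativity of the free entries $\{(x,y):W_{Y|X}(y|x)>0\}$. Complementary slackness at $V_{Y|X}=W_{Y|X}$ forces $\mu_{xy}=0$ on these entries (since $W_{Y|X}(y|x)>0$ strictly), so the stationarity condition reduces to
\begin{eqnarray*}
Q_X(x)\log\frac{W_{Y|X}(y|x)}{\sum_{x'}Q_X(x')W_{Y|X}(y|x')}=\lambda_x\quad\mbox{for all}\ (x,y)\ \mbox{with}\ W_{Y|X}(y|x)>0.
\end{eqnarray*}

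Splitting on the value of $Q_X(x)$ concludes the argument. When $Q_X(x)=0$, the equation trivially gives $\lambda_x=0$ and places no restriction on the ratio; when $Q_X(x)>0$, dividing by $Q_X(x)$ shows that $\log[W_{Y|X}(y|x)/\sum_{x'}Q_X(x')W_{Y|X}(y|x')]$ equals the constant $\lambda_x/Q_X(x)$ for every $y$ with $W_{Y|X}(y|x)>0$. Consolidating the two cases, the KKT stationarity condition is precisely the assertion that $W_{Y|X}(y|x)/\sum_{x'}Q_X(x')W_{Y|X}(y|x')$ is independent of $y$ whenever $Q_X(x)W_{Y|X}(y|x)>0$, which is the claim. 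The main delicacy is clerical rather than technical: one must carefully exclude entries with $W_{Y|X}(y|x)=0$ from the KKT system (they are not decision variables) and observe that entries with $Q_X(x)=0$ are invisible to the objective, so that the final quantification matches the one stated in the proposition.
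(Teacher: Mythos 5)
Your proposal is correct and follows essentially the same route as the paper: the paper proves the companion Proposition \ref{prop:cr} by a Lagrange-multiplier/KKT argument on the corresponding convex program (with complementary slackness handling the nonnegativity constraints and the support restriction) and states that Proposition \ref{prop:sp} is proved similarly, which is exactly the argument you give for $\min I(Q_X,V_{Y|X})$ over channels supported on the support of $W_{Y|X}$. Your handling of the rows with $Q_X(x)=0$ (invisible to the objective) matches the paper's device of restricting to $\mathcal{A}=\{x:Q_X(x)>0\}$ and $\mathcal{B}_x=\{y:W_{Y|X}(y|x)>0\}$.
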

\begin{proof}
The proof is similar to that of Proposition \ref{prop:cr}. The
details are omitted.
\end{proof}

One can readily prove the following result by combining
Propositions \ref{prop:cr} and \ref{prop:sp}.
\begin{proposition}
The following statements are equivalent:
\begin{enumerate}
\item $R_{cr}(Q_X,P_{Y|X})=I(Q_X,W_{Y|X})$;

\item $R^{\infty}_{sp}(Q_X,P_{Y|X})=I(Q_X,W_{Y|X})$;

\item for all $x, y$
such that $Q_X(x)W_{Y|X}(y|x)>0$, the value of
\begin{eqnarray*}
\log\frac{W_{Y|X}(y|x)}{\sum_{x'}Q_X(x')W_{Y|X}(y|x')}
\end{eqnarray*}
does not depend on $y$.
\end{enumerate}
\end{proposition}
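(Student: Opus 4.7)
The plan is to observe that this proposition is essentially a bookkeeping combination of the two preceding propositions, so very little new work is needed. The implication (1) $\Leftrightarrow$ (3) is already the content of Proposition \ref{prop:cr} verbatim, so the entire task reduces to establishing the equivalence (2) $\Leftrightarrow$ (3).

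For (3) $\Rightarrow$ (2), I would simply invoke Proposition \ref{prop:sp}: under (3), the minimum defining $R^{\infty}_{sp}(Q_X,W_{Y|X})$ in (\ref{Rspinfty}) is attained at $V_{Y|X}=W_{Y|X}$, whose objective value is exactly $I(Q_X,W_{Y|X})$, which gives (2).

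For the converse (2) $\Rightarrow$ (3), the key observation is that $V_{Y|X}=W_{Y|X}$ is always a feasible choice in the minimization (\ref{Rspinfty}), since the constraint ``$V_{Y|X}(y|x)=0$ whenever $W_{Y|X}(y|x)=0$'' is trivially met by $W_{Y|X}$ itself, and at this choice the objective takes the value $I(Q_X,W_{Y|X})$. Hence, assuming (2), the feasible point $W_{Y|X}$ achieves the minimum, which by the ``only if'' direction of Proposition \ref{prop:sp} forces (3).

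Putting these pieces together with Proposition \ref{prop:cr} closes the loop $(1) \Leftrightarrow (3) \Leftrightarrow (2)$. There is really no hard step: the only thing to be careful about is not overlooking the trivial feasibility of $W_{Y|X}$ in the (2) $\Rightarrow$ (3) direction, since without that one cannot convert the equality of optimal values into the statement that $W_{Y|X}$ itself is a minimizer, and it is precisely the latter form that Proposition \ref{prop:sp} characterizes.
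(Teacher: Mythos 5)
Your proposal is correct and matches the paper's intent exactly: the paper itself only remarks that the result follows "by combining Propositions \ref{prop:cr} and \ref{prop:sp}", and your argument—(1)$\Leftrightarrow$(3) from Proposition \ref{prop:cr}, plus (3)$\Leftrightarrow$(2) via Proposition \ref{prop:sp} together with the trivial feasibility of $V_{Y|X}=W_{Y|X}$ in (\ref{Rspinfty})—is precisely that combination, spelled out. No gaps.
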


\begin{proposition}\label{prop:constantbounds}
\begin{enumerate}
\item
$E(Q_X,W_{Y|X},R)\geq\max\{E_{ex}(Q_X,W_{Y|X},R),E_{rc}(Q_X,W_{Y|X},R)\}$;

\item $E(Q_X,P_{Y|X},R)\leq
E_{sp}(Q_X,W_{Y|X},R)$ with the possible exception of
$R=R^{\infty}_{sp}(Q_X,W_{Y|X})$ at which point the inequality not
necessary holds;

\item
$E^c(Q_{X},W_{Y|X},R)=\min_{V_{Y|X}}\left[D(V_{Y|X}\|W_{Y|X}|Q_{X})+|R-I(Q_X;V_{Y|X})|^+\right]$.
\end{enumerate}
\end{proposition}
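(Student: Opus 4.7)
The plan is to prove the three parts in parallel using random coding (with expurgation) on the constant-composition ensemble for achievability, and a method-of-types hypothesis-testing argument for the converses, largely following the template of Csisz\'ar--K\"orner. For each $n$, I would pick a type $P_n\in\mathcal{P}_n(\mathcal{X})$ with $\|P_n-Q_X\|\to 0$ (possible because $\mathcal{P}_n(\mathcal{X})$ is dense in $\mathcal{P}(\mathcal{X})$), carry out all of the arguments at type $P_n$, and at the end pass to the limit using continuity of the exponents in the input distribution.

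For part (1), I would draw $\lceil e^{nR}\rceil$ codewords independently and uniformly from $\mathcal{T}_n(P_n)$. Conditioning on the joint type $V_{Y|X}$ of a competing codeword with the channel output, the random ML error probability decomposes into a pair-probability $\exp(-nI(P_n,V_{Y|X}))$ times the channel likelihood $\exp(-nD(V_{Y|X}\|W_{Y|X}|P_n))$, and the standard union-bound-and-trim-at-one argument yields the exponent $E_{rc}(P_n,W_{Y|X},R)$. For the $E_{ex}$ half of the lower bound I would bound the pairwise ML error by the Bhattacharyya term, whose expectation over joint type $Q_{X\widetilde X}$ behaves as $\exp(-n\,\mathbb{E}_{Q_{X\widetilde X}}d_{W_{Y|X}}(X,\widetilde X))$, and then expurgate the worse half of the random constant-composition codebook via Markov's inequality, paying only $\frac{1}{n}\log 2$ in rate but gaining $I(Q_X,Q_{\widetilde X|X})-R$ in the exponent.

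For parts (2) and (3), the workhorse is a single packing inequality: for any $V_{Y|X}$ and any constant composition code $\mathcal{C}_n\subseteq\mathcal{T}_n(P_n)$, at most $\exp(nI(P_n,V_{Y|X})+o(n))$ codewords can have decoding regions whose $V$-typical output shells fit almost disjointly inside $\mathcal{Y}^n$. For part (2), applied to $V_{Y|X}$ with $I(Q_X,V_{Y|X})<R$, this forces most codewords to have $V_{Y|X}^n$-error probability bounded away from zero; a change of measure between $V_{Y|X}^n$ and $W_{Y|X}^n$ costs a factor $\exp(-nD(V_{Y|X}\|W_{Y|X}|P_n))$, and minimization gives $E_{sp}$. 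For part (3) the same inequality shows $P_c(\mathcal{C}_n,V_{Y|X})\le\exp(-n|R-I(P_n,V_{Y|X})|^+)$, and the change-of-measure step then yields the claimed lower bound on $E^c$; matching achievability is obtained by random coding uniform on $\mathcal{T}_n(P_n)$ with decoder tuned to the extremal $V_{Y|X}$, saturating the bound coordinate by coordinate.

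The main technical obstacle will be the passage $P_n\to Q_X$: the functions $E_{rc}$, $E_{ex}$, $E_{sp}$, and the expression in (3) must be shown continuous (or semicontinuous in the appropriate direction) in the input distribution. This is routine except near $R=R^\infty_{sp}(Q_X,W_{Y|X})$, where the feasible set $\{V_{Y|X}:I(Q_X,V_{Y|X})\le R\}$ in the definition of $E_{sp}$ can collapse abruptly as $P_n\to Q_X$ and the sphere-packing bound degenerates---this is precisely why part (2) is stated with a possible exception at that point. A secondary but routine bookkeeping item is verifying that the $\frac{1}{n}\log 2$ rate loss from expurgation, together with the type approximation error $\|P_n-Q_X\|$, is absorbed by the slack $\delta$ in Definition \ref{def:constantcomposition}.
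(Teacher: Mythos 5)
Your outline is sound, but it inverts the weight of the paper's actual proof. The paper does not re-derive the fixed-composition exponents at all: for a type $P_n\in\mathcal{P}_n(\mathcal{X})$ it simply cites the classical results (Exercise 5.17 and Theorem 5.2 of Csisz\'ar--K\"orner for the expurgated and random-coding bounds, the constant-composition sphere-packing theorem, and Lemmas 2 and 5 of Dueck--K\"orner for the correct-decoding exponent), and the entire appendix is devoted to the step you label ``routine'': passing from $P_n\in\mathcal{P}_n(\mathcal{X})$ to a general $Q_X\in\mathcal{P}(\mathcal{X})$. Your random-coding/expurgation and packing/change-of-measure sketches would indeed reproduce the cited fixed-type results, so the plan is workable, but two points in the limiting step need more care than ``routine'' suggests. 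First, $E_{ex}(P,W_{Y|X},R)$ is not continuous in $P$ in general, because $d_{W_{Y|X}}(x,\widetilde{x})$ can be infinite; what the paper proves, via a compactness/subsequence argument on the minimizing $Q_{\widetilde{X}|X}$, is the one-sided bound $\limsup_n E_{ex}(P_n,W_{Y|X},R)\geq E_{ex}(Q_X,W_{Y|X},R)$, which is exactly the direction achievability needs --- your parenthetical ``semicontinuous in the appropriate direction'' is doing real work there and must be spelled out. Second, for the sphere-packing converse the paper does not argue continuity in the input alone: it bounds $E_{sp}(P_n,W_{Y|X},R-2\delta)$ by $E_{sp}(Q_X,W_{Y|X},R-3\delta)+\delta$ by evaluating the $Q_X$-optimal test channel at $P_n$, and then removes the $\delta$'s using continuity of $E_{sp}(Q_X,W_{Y|X},\cdot)$ in $R$; it is this continuity in $R$ that fails at $R^{\infty}_{sp}(Q_X,W_{Y|X})$, which is the precise source of the stated exception (related to, but not quite, the ``feasible set collapse as $P_n\to Q_X$'' you describe). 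With those two points made explicit your argument goes through; what it buys over the paper's proof is self-containedness, at the cost of re-proving standard material that the paper deliberately treats as known.
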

Remark: $E_{ex}(Q_X,W_{Y|X},R)$, $E_{rc}(Q_X,W_{Y|X},R)$, and
$E_{sp}(Q_X,W_{Y|X},R)$ are respectively the expurgated exponent,
the random coding exponent, and the sphere packing exponent of
channel $W_{Y|X}$ for constant composition codes of type
approximately $Q_X$. The results in Proposition
\ref{prop:constantbounds} are well known. However, bounding the
decoding error probability of constant composition codes often
serves as an intermediate step in characterizing the reliability
function for general block codes; as a consequence, the
reliability function for constant composition codes is rarely
explicitly defined. Moreover, $E_{ex}(Q_X,W_{Y|X},R)$,
$E_{rc}(Q_X,W_{Y|X},R)$, and $E_{sp}(Q_X,W_{Y|X},R)$ are commonly
used to bound the decoding error probability of constant
composition codes for a fixed block length $n$; therefore, it is
implicitly assumed that $Q_X$ is taken from
$\mathcal{P}_n(\mathcal{X})$ (see, e.g., \cite{CK81B}). In
contrast, we consider a sequence of constant composition codes
with block length increasing to infinity and type converging to
$Q_X$ for some $Q_X\in\mathcal{P}(\mathcal{X})$ (see Definition
\ref{def:constantcomposition}). A continuity argument is required
for passing $Q_X$ from $\mathcal{P}_n(\mathcal{X})$ to
$\mathcal{P}(\mathcal{X})$. For completeness, we supply the proof
in Appendix \ref{app:prop:constantbounds}. Note that different
from $E(Q_X,W_{Y|X},\cdot)$, the function $E^c(Q_X,W_{Y|X},\cdot)$
has been completely characterized.

\begin{proposition}\label{prop:constanttogeneral}
\begin{enumerate}
\item $E(W_{Y|X},R)=\sup_{Q_X}E(Q_X,W_{Y|X},R)$,

\item $E^c(W_{Y|X},R)=\inf_{Q_X}E^c(Q_X,W_{Y|X},R)$.
\end{enumerate}
\end{proposition}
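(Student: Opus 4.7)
Both equalities split into opposing inequalities. Since every constant composition code is a block code, the class of code sequences witnessing $E(Q_X,W_{Y|X},R)$ is contained in the class witnessing $E(W_{Y|X},R)$, and similarly for the correct decoding exponent. Because $E$ is the largest and $E^c$ the smallest achievable exponent, these inclusions immediately give $E(W_{Y|X},R)\geq\sup_{Q_X}E(Q_X,W_{Y|X},R)$ and $E^c(W_{Y|X},R)\leq\inf_{Q_X}E^c(Q_X,W_{Y|X},R)$. The substantive work is in the reverse (converse) directions.

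For the converse of part 1, my plan is to partition any candidate block code $\mathcal{C}_n$ by type, writing $\mathcal{C}_n=\bigcup_i\mathcal{C}_n^{(i)}$ with $\mathcal{C}_n^{(i)}\subseteq\mathcal{T}_n(P_n^{(i)})$. Since $|\mathcal{P}_n(\mathcal{X})|\leq(n+1)^{|\mathcal{X}|}$, pigeonhole yields an index $i_n^*$ with $|\mathcal{C}_n^{(i_n^*)}|\geq|\mathcal{C}_n|/(n+1)^{|\mathcal{X}|}$, so the rate loss is $O(n^{-1}\log n)$. Because restricting the ML decoder to the sub-code cannot raise the error probability on that sub-code's codewords, an averaging argument gives $P_e(\mathcal{C}_n^{(i_n^*)},W_{Y|X})\leq(n+1)^{|\mathcal{X}|}P_e(\mathcal{C}_n,W_{Y|X})$, so the exponent is preserved up to $O(n^{-1}\log n)$ as well. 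Compactness of $\mathcal{P}(\mathcal{X})$ then permits extraction of a subsequence along which the types $P_{n_k}^{(i_{n_k}^*)}$ converge to some $Q_X$; padding the intermediate indices with codes supported on type classes that approximate $Q_X$ and carry rate at least $R-\delta$ produces a constant composition sequence witnessing $E(Q_X,W_{Y|X},R)\geq E$.

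For the converse of part 2, a direct pigeonhole is problematic: the type maximizing the $P_c$-contribution need not have enough codewords to sustain rate close to $R$, and taking the largest sub-code forfeits control of $P_c$. Instead I would combine the bound $P_c(\mathcal{C}_n,W_{Y|X})\leq\sum_i\frac{|\mathcal{C}_n^{(i)}|}{|\mathcal{C}_n|}P_c(\mathcal{C}_n^{(i)},W_{Y|X})$ with an Arimoto-type finite-blocklength converse of the form $P_c(\mathcal{C}_n^{(i)},W_{Y|X})\leq\mathrm{poly}(n)\,e^{-nE^c(P_n^{(i)},W_{Y|X},R_n^{(i)})}$ with $R_n^{(i)}=\frac{1}{n}\log|\mathcal{C}_n^{(i)}|$, obtained from the same type-class analysis that underlies the sphere-packing converse. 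Writing $|\mathcal{C}_n^{(i)}|/|\mathcal{C}_n|=e^{n(R_n^{(i)}-R_n)}$ and exploiting that $E^c(Q_X,W_{Y|X},\cdot)$ is $1$-Lipschitz in $R$---a direct consequence of the representation $E^c(Q_X,W_{Y|X},R)=\min_{V_{Y|X}}[D(V_{Y|X}\|W_{Y|X}|Q_X)+|R-I(Q_X,V_{Y|X})|^+]$ in part 3 of Proposition \ref{prop:constantbounds}---the exponents combine as $E^c(P_n^{(i)},W_{Y|X},R_n^{(i)})+R_n-R_n^{(i)}\geq E^c(P_n^{(i)},W_{Y|X},R_n)\geq\inf_{Q_X}E^c(Q_X,W_{Y|X},R_n)$. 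Summing over the polynomially many types, taking $-n^{-1}\log(\cdot)$, and using continuity of $\inf_{Q_X}E^c(Q_X,W_{Y|X},\cdot)$ at $R$ (which follows from Berge's theorem and compactness of $\mathcal{P}(\mathcal{X})$) to bridge the gap between $R_n$ and $R$ gives $\liminf_n-\frac{1}{n}\log P_c(\mathcal{C}_n,W_{Y|X})\geq\inf_{Q_X}E^c(Q_X,W_{Y|X},R)$.

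The main obstacle I anticipate is the non-asymptotic converse $P_c\leq\mathrm{poly}(n)\,e^{-nE^c(P_n,W_{Y|X},R_n)}$ for a single constant composition code at finite $n$, which is not stated as such in the earlier propositions and needs to be extracted from a decoding-region-by-joint-type dissection plus a packing bound. The $1$-Lipschitz property of $E^c$ is indispensable: without it, the rate loss incurred in passing from the block code to its constant composition sub-codes could not be compensated, and the naive bound in terms of $E^c(P_n^{(i)},W_{Y|X},R_n^{(i)})$ alone would be too weak. A secondary technicality, already appearing in part 1, is promoting the extracted subsequence into a sequence indexed by all $n$ satisfying the liminf rate constraint, which I would dispose of by padding with codes drawn from type classes close in $\ell_1$ norm to the limit distribution $Q_X$.
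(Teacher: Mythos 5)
Your proposal is correct, and for part 1 it is essentially the paper's own argument: extract the dominant-type sub-code by pigeonhole over the at most $(n+1)^{|\mathcal{X}|}$ types, losing only $O(n^{-1}\log n)$ in rate and a polynomial factor in error probability (the paper uses $P_{e,\max}$, you use an averaging bound; both are covered by the remark that expurgation makes the two equivalent), then pass to a convergent subsequence of types. For part 2 you take a genuinely different route. The paper's proof is an operational reduction borrowed from Dueck--K\"{o}rner: pick the type class whose codewords contribute the most to $P_c$, and pad that sub-code with \emph{duplicate} codewords (explicitly allowed by the paper's definition of a code as an ordered collection) back up to cardinality $|\mathcal{C}_n|$; this preserves the rate exactly, leaves the ML decision regions unchanged, and costs only a factor $(n+1)^{|\mathcal{X}|}$ in $P_c$ --- precisely resolving the rate-versus-$P_c$ tension that you flagged as the reason a direct pigeonhole fails (note the inequality as printed in the paper, $P_c(\mathcal{C}'_n,W_{Y|X})\leq(n+1)^{|\mathcal{X}|}P_c(\mathcal{C}_n,W_{Y|X})$, has the roles of $\mathcal{C}_n$ and $\mathcal{C}'_n$ swapped; the cited fact from \cite{DK79} is the one you would need). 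Your alternative instead sums the per-type contributions, applies the finite-blocklength constant-composition converse $P_c\leq\mathrm{poly}(n)\,e^{-n\min_{V}[D(V\|W_{Y|X}|P)+|R'-I(P,V)|^+]}$ (this is exactly \cite[Lemma 5]{DK79}, the same lemma the paper invokes in Appendix B to prove part 3 of Proposition \ref{prop:constantbounds}, so the "obstacle" you anticipate is already available), and uses the $1$-Lipschitz dependence on $R$ to convert the cardinality ratio $e^{n(R_n^{(i)}-R_n)}$ into a rate shift. The trade-off: your argument needs the explicit formula for $E^c(Q_X,W_{Y|X},R)$ and its Lipschitz/continuity properties, whereas the paper's reduction is formula-free, purely operational, and one line; in return, your version avoids the duplicate-codeword padding device and directly yields a nonasymptotic upper bound on $P_c$ of arbitrary block codes, which is slightly more informative than the asymptotic identity being proved.
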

Remark: In view of the fact that $E^c(Q_X,W_{Y|X},R)$ is a
continuous function of $Q_X$, we can replace ``$\inf$" with
``$\min$" in the above equation, i.e.,
\begin{eqnarray}
E^c(W_{Y|X},R)=\min\limits_{Q_X}E^c(Q_X,W_{Y|X},R).\label{channelc}
\end{eqnarray}

\begin{proof}
It is obvious that $E(W_{Y|X},R)\geq\sup_{Q_X}E(Q_X,W_{Y|X},R)$;
the other direction follows from the fact that every block code
$\mathcal{C}_n$ contains a constant composition code
$\mathcal{C}'_n$ with $P_{e,\max}(\mathcal{C}'_n,W_{Y|X})\leq
P_{e,\max}(\mathcal{C}_n,W_{Y|X})$ and $R(\mathcal{C}'_n)\geq
R(\mathcal{C}_n)-|\mathcal{X}|\frac{\log(n+1)}{n}$. Similarly, it
is clear that $E^c(W_{Y|X},R)\leq\inf_{Q_X}E^c(Q_X,W_{Y|X},R)$;
the other direction follows from the fact that given any block
code $\mathcal{C}_n$, one can construct a constant composition
code $\mathcal{C}'_n$ with $P_{c}(\mathcal{C}'_n,W_{Y|X})\leq
(n+1)^{|\mathcal{X}|}P_{c}(\mathcal{C}_n,W_{Y|X})$ and
$R(\mathcal{C}'_n)=R(\mathcal{C}_n)$ \cite{DK79}.
\end{proof}

The expurgated exponent, random coding exponent, and sphere
packing exponent of channel $W_{Y|X}$ for general block codes are
defined as follows:
\begin{enumerate}
\item expurgated exponent
\begin{eqnarray}
E_{ex}(W_{Y|X},R)=\max\limits_{Q_X}E_{ex}(Q_X,W_{Y|X},R),\label{channelex}
\end{eqnarray}

\item random coding exponent
\begin{eqnarray}
E_{rc}(W_{Y|X},R)=\max\limits_{Q_X}E_{rc}(Q_X,W_{Y|X},R),\label{channelrc}
\end{eqnarray}

\item sphere packing exponent
\begin{eqnarray}
E_{sp}(W_{Y|X},R)=\max\limits_{Q_X}E_{sp}(Q_X,W_{Y|X},R).
\label{channelsp}
\end{eqnarray}
\end{enumerate}

Let $R^{\infty}_{sp}(W_{Y|X})$ be the smallest $R$ to the right of
which $E_{sp}(W_{Y|X},R)$ is finite. It is known \cite[Excercise
5.3]{CK81B}\cite{Gallager68} that
\begin{eqnarray*}
R^{\infty}_{sp}(W_{Y|X})&=&\max_{Q_X}R^{\infty}_{sp}(Q_X,W_{Y|X})\\
&=&-\log\left[\min\limits_{Q_X}\max\limits_{y}\sum\limits_{x\in\mathcal{X}:W_{Y|X}(y|x)>0}Q_X(x)\right].
\end{eqnarray*}
By Propositions \ref{prop:constantbounds} and
\ref{prop:constanttogeneral}, we recover the following well-known
result \cite{Gallager68, CK81B}:
\begin{eqnarray}
\max\{E_{ex}(W_{Y|X},R),E_{rc}(W_{Y|X},R)\}\leq E(W_{Y|X},R)\leq
E_{sp}(W_{Y|X},R)\label{reliabilitych}
\end{eqnarray}
with the possible exception of $R=R^{\infty}_{sp}(W_{Y|X})$ at
which point the second inequality in (\ref{reliabilitych}) not
necessarily holds.

Now we proceed to review the results on the reliability function
of fixed-rate Slepian-Wolf coding. A fixed-rate Slepian-Wolf code
$\phi_n(\cdot)$ is a mapping from $\mathcal{X}^n$ to a set
$\mathcal{A}_n$. The rate of $\phi_n(\cdot)$ is defined as
\begin{eqnarray*}
R(\phi_n)=\frac{1}{n}\log|\mathcal{A}_n|.
\end{eqnarray*}
Given $\phi_n(X^n)$ and $Y^n$, the output of the optimal
\textit{maximum a posteriori} (MAP) decoder is
\begin{eqnarray*}
\widehat{X}^n&=&\arg\min\limits_{x^n:
\phi_n(x^n)=\phi_n(X^n)}-\sum\limits_{i=1}^n\log
P_{X|Y}(x_i|Y_i)\\
&=&\arg\min\limits_{x^n:
\phi_n(x^n)=\phi_n(X^n)}-\sum\limits_{i=1}^n\log P_{XY}(x_i,Y_i),
\end{eqnarray*}
where the ties are broken in an arbitrary manner. The decoding
error probability of Slepian-Wolf code $\phi_n(\cdot)$ is defined
as
\begin{eqnarray*}
P_e(\phi_n,P_{XY})=\mbox{Pr}\{\widehat{X}^n\neq X^n\}.
\end{eqnarray*}
The correct decoding probability of Slepian-Wolf code
$\phi_n(\cdot)$ is defined as
\begin{eqnarray*}
P_c(\phi_n,P_{XY})=1-P_e(\phi_n,P_{XY}).
\end{eqnarray*}

\begin{definition}\label{def:swfixed}
Given a joint probability distribution $P_{XY}$, we say an error
exponent $E\geq 0$ is achievable with fixed-rate Slepian-Wolf
codes at rate $R$ if for any $\delta>0$, there exists a sequence
of fixed-rate Slepian-Wolf codes $\{\phi_n\}$ such that
\begin{eqnarray*}
&&\limsup\limits_{n\rightarrow\infty}R(\phi_n)\leq R+\delta,\\
&&\limsup\limits_{n\rightarrow\infty}-\frac{1}{n}\log
P_e(\phi_n,P_{XY})\geq E-\delta.
\end{eqnarray*}
The largest achievable error exponent at rate $R$ is denoted by
$E_f(P_{XY},R)$. The function $E_f(P_{XY},\cdot)$ is referred to
as the reliability function of fixed-rate Slepian-Wolf coding.
Similarly, we say a correct decoding exponent $E^c\geq 0$ is
achievable with fixed-rate Slepian-Wolf codes at rate $R$ if for
any $\delta>0$, there exists a sequence of fixed-rate Slepian-Wolf
codes $\{\phi_n\}$ such that
\begin{eqnarray*}
&&\limsup\limits_{n\rightarrow\infty}R(\phi_n)\leq R+\delta,\\
&&\liminf\limits_{n\rightarrow\infty}-\frac{1}{n}\log
P_c(\phi_n,P_{XY})\leq E^c+\delta.
\end{eqnarray*}
The smallest achievable correct decoding exponent at rate $R$ is
denoted by $E^c_f(P_{XY},R)$. It will be seen that
$E^c_f(P_{XY},R)$ is positive if and only if $R<H(X|Y)$.
Therefore, we shall refer to the function $E^c_f(P_{XY},\cdot)$ as
the reliability function of fixed-rate Slepian-Wolf coding below
the Slepian-Wolf limit.
\end{definition}

Fixed-rate Slepian-Wolf coding has been studied extensively
\cite{Gallager76, CK80, CK81, Csiszar82, OH94}. The expurgated
exponent, random coding scheme, and sphere packing exponent of
fixed-rate Slepian-Wolf coding are defined as follows:
\begin{enumerate}
\item expurgated exponent
\begin{eqnarray}
E_{f,ex}(P_{XY},R)=\min\limits_{Q_X}\left[D(Q_X\|P_X)+E_{ex}(Q_X,P_{Y|X},H(Q_X)-R)\right],\label{fixedswex}
\end{eqnarray}

\item random coding exponent
\begin{eqnarray}
E_{f,rc}(P_{XY},R)=\min\limits_{Q_X}\left[D(Q_X\|P_X)+E_{rc}(Q_X,P_{Y|X},H(Q_X)-R)\right],\label{fixedswrc}
\end{eqnarray}

\item sphere packing exponent
\begin{eqnarray}
E_{f,sp}(P_{XY},R)=\min\limits_{Q_X}\left[D(Q_X\|P_X)+E_{sp}(Q_X,P_{Y|X},H(Q_X)-R)\right].\label{fixedswsp}
\end{eqnarray}
\end{enumerate}
Equivalently, the random coding exponent and sphere packing
exponent of fixed-rate Slepian-Wolf coding can be written as
\cite{Gallager76}:
\begin{eqnarray*}
&&E_{rc}(P_{XY}, R)=\max\limits_{0\leq\rho\leq
1}\left\{-\log\sum\limits_{y}\left[\sum\limits_{x}P_{XY}(x,y)^{\frac{1}{1+\rho}}\right]^{1+\rho}+\rho
R\right\},\\
&&E_{sp}(P_{XY},
R)=\sup\limits_{\rho>0}\left\{-\log\sum\limits_{y}\left[\sum\limits_{x}P_{XY}(x,y)^{\frac{1}{1+\rho}}\right]^{1+\rho}+\rho
R\right\}.
\end{eqnarray*}
To see the connection between the random coding exponent and the
sphere packing exponent, we shall write them in the following
parametric forms \cite{Gallager76}:
\begin{eqnarray*}
&&R=H(X^{(\rho)}|Y^{(\rho)}),\\
&&E_{f,sp}(P_{XY},R)=D(P_{X^{(\rho)}Y^{(\rho)}}\|P_{XY}),
\end{eqnarray*}
and
\begin{eqnarray*}
E_{rc}(P_{XY}, R)=\left\{\begin{array}{ll}
  D(P_{X^{(\rho)}Y^{(\rho})}\|P_{XY})&\mbox{if } \left. H(X|Y)\leq R\leq H(X_{\rho}|Y_{\rho})\right|_{\rho=1},\\
  -\log\sum\limits_{y}\left[\sum\limits_{x}\sqrt{P_{XY}(x,y)}\right]^2+R  &\mbox{if } R>\left.
  H(X^{(\rho)}|Y^{(\rho)})\right|_{\rho=1},
\end{array}\right.&\\
\end{eqnarray*}
where the joint distribution of $(X^{(\rho)},Y^{(\rho)})$ is
$P_{X^{(\rho)}Y^{(\rho)}}$, which is specified by
\begin{eqnarray}
&&P_{Y^{(\rho)}}(y)=\frac{P_Y(y)\left[\sum_{x}P_{X|Y}(x|y)^{\frac{1}{1+\rho}}\right]^{1+\rho}}{\sum_{y'}P_{Y}(y')\left[\sum_{x}P_{X|Y}(x|y')^{\frac{1}{1+\rho}}\right]^{1+\rho}},\quad y\in\mathcal{Y},\label{para1}\\
&&P_{X^{(\rho)}|Y^{(\rho)}}(x|y)=\frac{P_{X|Y}(x|y)^{\frac{1}{1+\rho}}}{\sum_{x'}P_{X|Y}(x'|y)^{\frac{1}{1+\rho}}},\quad
x\in\mathcal{X},y\in\mathcal{Y}.\label{para2}
\end{eqnarray}
Define the critical rate
\begin{eqnarray*}
R_{f,cr}(P_{XY})=\left.H(X^{(\rho)}|Y^{(\rho)})\right|_{\rho=1}.
\end{eqnarray*}
Note that $E_{rc}(P_{XY},R)$ and $E_{sp}(P_{XY},R)$ coincide when
$R\in[H(X|Y), R_{f,cr}(P_{XY})]$. Let
$R^{\infty}_{f,sp}(P_{XY})=\sup\{R:E_{f,sp}(P_{XY},R)<\infty\}$.
It is shown in \cite{CHJL07ISIT} that
\begin{eqnarray*}
R^{\infty}_{f,sp}(P_{XY})=\max_{y}\log|\{x\in\mathcal{X}:P_{X|Y}(x|y)>0\}|.
\end{eqnarray*}
It is well known that the reliability function $E_f(P_{XY},\cdot)$
is upper-bounded by $E_{f,sp}(P_{XY},\cdot)$ and lower-bounded by
$E_{f,rc}(P_{XY},\cdot)$ and $E_{f,ex}(P_{XY},\cdot)$
\cite{Gallager76, CK80, CK81}, i.e.,
\begin{eqnarray}
\max\{E_{f,rc}(P_{XY},R),E_{f,ex}(P_{XY},R)\}\leq
E_{f}(P_{XY},R)\leq E_{f,sp}(P_{XY},R)\label{fixedsw}
\end{eqnarray}
with the possible exception of $R=R^{\infty}_{f,sp}(P_{XY})$ at
which point the second inequality in (\ref{fixedsw}) not
necessarily holds. Note that $E_{f}(P_{XY},R)$ is completely
characterized for $R\in[H(X|Y), R_{f,cr}(P_{XY})]$.

Unlike $E_f(P_{XY},\cdot)$, the function $E^c_f(P_{XY},\cdot)$ has
been characterized for all $R$. Specifically, it is shown in
\cite{OH94,CHJL07} that
\begin{eqnarray}
E^c_{f}(P_{XY},R)=\min\limits_{Q_{X}}\left[D(Q_{X}\|P_X)+E^c(Q_{X},P_{Y|X},H(Q_X)-R)\right].\label{fixedswc}
\end{eqnarray}

Comparing  (\ref{channelex}) with (\ref{fixedswex}),
(\ref{channelrc}) with (\ref{fixedswrc}), (\ref{channelsp}) with
(\ref{fixedswsp}), and (\ref{channelc})  with (\ref{fixedswc}),
one can easily see that there exists an intimate connection
between fixed-rate Slepian-Wolf coding for source distribution
$P_{XY}$ and channel coding for channel $P_{Y|X}$. This connection
can be roughly interpreted as the manifestation of the following
facts \cite{AD82}.
\begin{enumerate}
\item Given, for each type $Q_X\in\mathcal{P}_n(\mathcal{X})$, a constant composition code
$\mathcal{C}_n(Q_X)\subseteq\mathcal{T}_n(Q_X)$ with
$R(\mathcal{C}_n(Q_X))\approx H(Q_X)-R$ and
$P_{e,\max}(\mathcal{C}_n(Q_X),P_{Y|X})\approx e^{-nE(Q_X)}$, one
can use $\mathcal{C}_n(Q_X)$ to partition type class
$\mathcal{T}_n(Q_X)$ into approximately $e^{nR}$ disjoint subsets
such that each subset is a constant composition code of type $Q_X$
with the maximum decoding error probability over channel $P_{Y|X}$
approximately equal to or less than that of $\mathcal{C}_n(Q_X)$.
Note that these partitions, one for each type class, yield a
fixed-rate Slepian-Wolf code of rate approximately $R$ with
$\mbox{Pr}\{\widehat{X}^n\neq
X^n|X^n\in\mathcal{T}_n(Q_X)\}\lessapprox e^{-nE(Q_X)}$. Since
$\mbox{Pr}\{X^n\in\mathcal{T}_n(Q_X)\}\approx e^{-nD(Q_X\|P_X)}$
(cf. (\ref{typeclass}), (\ref{typeprobability})), it follows that
$\mbox{Pr}\{\widehat{X}^n\neq
X^n,X^n\in\mathcal{T}_n(Q_X)\}\lessapprox
e^{-n[D(Q_X\|P_X)+E(Q_X)]}$. The overall decoding error
probability $\mbox{Pr}\{\widehat{X}^n\neq X^n\}$ of the resulting
Slepian-Wolf code can be upper-bounded, on the exponential scale,
by $e^{-n[D(Q^*_X\|P_X)+E(Q^*_X)]}$, where
$Q^*_X=\arg\min_{Q_X}D(Q_X\|P_X)+E(Q_X)$. In contrast, one has the
freedom to choose $Q_X$ in channel coding, which explains why
maximization (instead of minimization) is used in
(\ref{channelex}), (\ref{channelrc}), and (\ref{channelsp}).

\item Given a fixed-rate Slepian-Wolf code $\phi_n(\cdot)$ with $R(\phi_n)\approx R$
and $P_{e}(\phi_n,P_{XY})\approx e^{-nE}$, one can, for each type
$Q_X\in\mathcal{P}_n(\mathcal{X})$, lift out a constant
composition code $\mathcal{C}_n(Q_X)\subseteq\mathcal{T}_n(Q_X)$
with $R(\mathcal{C}_n(Q_X))\gtrapprox H(Q_X)-R$ and
$P_{e}(\mathcal{C}_n(Q_X),P_{Y|X})\lessapprox
e^{-n[E-D(Q_X\|P_X)]}$.

\item The correct decoding exponents for channel coding and fixed-rate Slepian-Wolf coding can be interpreted in a similar way.
Note that in channel coding, to maximize the correct decoding
probability one has to minimize the correct decoding exponent;
this is why in (\ref{channelc}) minimization (instead of
maximization) is used.
\end{enumerate}

Therefore, it should be clear that to characterize the reliability
functions for channel coding and fixed-rate Slepian-Wolf coding,
it suffices to focus on constant composition codes. It will be
shown in the next section that a similar reduction holds for
variable-rate Slepian-Wolf coding. Indeed, the reliability
function for constant component codes plays a predominant role in
determining the fundamental rate-error tradeoff in variable-rate
Slepian-Wolf coding.

\section{Variable-Rate Slepian-Wolf Coding: Above the Slepian-Wolf Limit}\label{variablerate}

A variable-rate Slepian-Wolf code $\varphi_n(\cdot)$ is a mapping
from $\mathcal{X}^n$ to a binary prefix code $\mathcal{B}_n$. Let
$l(\phi_n(x^n))$ denote the length of binary string $\phi_n(x^n)$.
The rate\footnote{It is worth noting that $R(\varphi_n,P_{XY})$
depends on $P_{XY}$ only through $P_X$.} of variable-rate
Slepian-Wolf code $\phi_n(\cdot)$ is defined as
\begin{eqnarray*}
R(\varphi_n,P_{XY})=\frac{1}{n\log_2e}\mathbb{E}l(\varphi_n(X^n)).
\end{eqnarray*}
Given $\varphi_n(X^n)$ and $Y^n$, the output of the optimal
\textit{maximum a posteriori} (MAP) decoder is
\begin{eqnarray*}
\widehat{X}^n&=&\arg\min\limits_{x^n:
\varphi_n(X^n)=\varphi_n(X^n)}-\sum\limits_{i=1}^n\log
P_{X|Y}(x_i|Y_i)\\
&=&\arg\min\limits_{x^n:
\varphi_n(X^n)=\varphi_n(X^n)}-\sum\limits_{i=1}^n\log
P_{XY}(x_i,Y_i),
\end{eqnarray*}
where the ties are broken in an arbitrary manner. The decoding
error probability of variable-rate Slepian-Wolf code
$\varphi_n(\cdot)$ is defined as
\begin{eqnarray*}
P_e(\varphi_n,P_{XY})=\mbox{Pr}\{\widehat{X}^n\neq X^n\}.
\end{eqnarray*}
The correct decoding probability of Slepian-Wolf code
$\varphi_n(\cdot)$ is defined as
\begin{eqnarray*}
P_c(\phi_n,P_{XY})=1-P_e(\varphi_n,P_{XY}).
\end{eqnarray*}

\begin{definition}\label{def:swvariable}
Given a joint probability distribution $P_{XY}$, we say an error
exponent $E\geq 0$ is achievable with variable-rate Slepian-Wolf
codes at rate $R$ if for any $\delta>0$, there exists a sequence
of variable-rate Slepian-Wolf codes $\{\varphi_n\}$ such that
\begin{eqnarray*}
&&\limsup\limits_{n\rightarrow\infty}R(\varphi_n,P_{XY})\leq R+\delta,\\
&&\limsup\limits_{n\rightarrow\infty}-\frac{1}{n}\log
P_e(\varphi_n,P_{XY})\geq E-\delta.
\end{eqnarray*}
The largest achievable error exponent at rate $R$ is denoted by
$E_v(P_{XY},R)$. The function $E_v(P_{XY},\cdot)$ is referred to
as the reliability function of variable-rate Slepian-Wolf coding.
\end{definition}

The power of variable-rate Slepian-Wolf coding results from its
flexibility in rate allocation. Note that in fixed-rate
Slepian-Wolf coding, one has to allocate the same amount of rate
to each type class\footnote{Since there are only polynomial number
of types for any given $n$ (cf. (\ref{cardinality})), the encoder
can convey the type information to the decoder using negligible
amount of rate when $n$ is large enough. Therefore, without loss
of much generality, we can assume that the type of $X^n$ is known
to the decoder. Under this assumption, an optimal fixed-rate
Slepian-Wolf encoder of rate $R$ should partition
$\mathcal{T}_n(P)$ into $\min\{|\mathcal{T}_n(P)|,e^{nR}\}$
disjoint subsets for each $P\in\mathcal{P}_n$. It can be seen that
the rate allocated to $\mathcal{T}_n(P)$ is always $R$ if
$|\mathcal{T}_n(P)|\geq e^{nR}$.}. In general, the type $Q^*_X$
that dominates the error probability of fixed-rate Slepian-Wolf
coding is different from $P_X$. In contrast, for variable-rate
Slepian-Wolf coding, we can losslessly compress the sequences of
types that are bounded away $P_X$ by allocating enough rate to
those type classes (but its contribution to the overall rate is
still negligible since the probability of those type classes are
extremely small), and therefore, effectively eliminate the
dominant error event in fixed-rate Slepian-Wolf coding. As a
consequence, the types that can cause decoding error in
variable-rate Slepian-Wolf coding must be very close to $P_X$.
This is the main intuition underlying the proof of the following
theorem.

\begin{theorem}\label{theorem1}
$E_v(P_{XY},R)=E(P_X,P_{Y|X},H(P_X)-R)$.
\end{theorem}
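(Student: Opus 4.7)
The plan is to prove the two inequalities $E_v(P_{XY},R)\geq E(P_X,P_{Y|X},H(P_X)-R)$ and $E_v(P_{XY},R)\leq E(P_X,P_{Y|X},H(P_X)-R)$ separately, both by exploiting the correspondence between variable-rate Slepian-Wolf codes and constant composition channel codes of type approximately $P_X$ at rate $H(P_X)-R$ sketched in the three numbered items preceding the statement.

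For achievability I would construct a variable-rate Slepian-Wolf code that mimics a near-optimal constant composition code on the dominant type class. Fix $\epsilon>0$. For each $P\in\mathcal{P}_n(\mathcal{X})$ with $\|P-P_X\|\leq\epsilon$, partition $\mathcal{T}_n(P)$ into $\lceil e^{nR}\rceil$ bins, each a near-optimal constant composition code of type $P$ at rate $H(P)-R$ attaining error exponent arbitrarily close to $E(P,P_{Y|X},H(P)-R)$ (such codes exist by Definition~\ref{def:constantcomposition} and Proposition~\ref{prop:constantbounds}). For $P$ outside the $\epsilon$-ball, encode $\mathcal{T}_n(P)$ losslessly using distinct prefix codewords, and prepend the type index using $O(\log n)$ extra bits. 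Since atypical types $P$ contribute probability at most $e^{-nD(P\|P_X)}$, their cost to the average rate and the overall error is exponentially small, so the overall rate is $R+o(1)$. Because each bin is type-pure, the MAP decoder collapses to bin-ML decoding, yielding $\Pr(\text{err},X^n\in\mathcal{T}_n(P))\leq e^{-n[D(P\|P_X)+E(P,P_{Y|X},H(P)-R)]+o(n)}$; summing over $P$ in the $\epsilon$-ball and letting $\epsilon\to 0$, continuity of $E$ and $D$ in the type argument delivers the exponent $E(P_X,P_{Y|X},H(P_X)-R)$.

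For the converse, given any code sequence $\{\varphi_n\}$ with $\limsup R(\varphi_n,P_{XY})\leq R+\delta$ and $\limsup -\frac{1}{n}\log P_e(\varphi_n,P_{XY})\geq E-\delta$, I would extract a matching constant composition code. Denote by $r(P)=\mathbb{E}[l(\varphi_n(X^n))\mid X^n\in\mathcal{T}_n(P)]/(n\log_2 e)$ the conditional rate on type $P$; the rate constraint $\sum_P\Pr(\mathcal{T}_n(P))\,r(P)\leq R+\delta$ combined with the concentration $\Pr(\|P_{X^n}-P_X\|\leq\epsilon)\to 1$ forces a type sequence $P^*_n\to P_X$ with $r(P^*_n)\leq R+O(\epsilon)+o(1)$. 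The partition of $\mathcal{T}_n(P^*_n)$ induced by $\varphi_n$ yields constant composition codes $\mathcal{C}_b^{P^*_n}$ of type $P^*_n$, and because $\log P_X^n(x^n)$ is constant on $\mathcal{T}_n(P^*_n)$, the MAP decoder restricted to same-type competitors coincides with the bin's ML decoder. A Kraft inequality together with Jensen's inequality gives $\max_b|\mathcal{C}_b^{P^*_n}|\geq |\mathcal{T}_n(P^*_n)|\cdot e^{-n r(P^*_n)}$, producing a constant composition code of type $P^*_n$ at rate $\geq H(P_X)-R-o(1)$; bounding its error via the aggregate Slepian-Wolf error and invoking Definition~\ref{def:constantcomposition} together with continuity of $E(P_X,P_{Y|X},\cdot)$ yields $E\leq E(P_X,P_{Y|X},H(P_X)-R)$.

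The main technical obstacle is the converse's error analysis: the naive inequality $\Pr(\text{err},\mathcal{C}_b^{P^*_n})\leq \Pr(\text{err})$ is wasteful and loses a factor of $R$ in the extracted constant composition code's exponent, because under achievability the total error mass is spread over roughly $e^{nR}$ bins in the typical type class. The fix is to exploit that the total Slepian-Wolf error is a sum over bins, so a suitably chosen minimum-error (or averaged) bin has error probability on the order of $\Pr(\text{err})/N^{P^*_n}$ rather than $\Pr(\text{err})$; balancing this against the Kraft-Jensen lower bound on bin size, handling the average-over-neighborhood nature of $r(P^*_n)$, and using the continuity of the reliability function on the interior of its domain are the delicate steps that make the converse match the achievability exactly.
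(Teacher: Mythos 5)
Your achievability direction has a genuine gap at its central step. You assert that each type class $\mathcal{T}_n(P)$ with $\|P-P_X\|\leq\epsilon$ can be partitioned into roughly $e^{nR}$ bins, \emph{each} of which is a constant composition code attaining an exponent close to $E(P,P_{Y|X},H(P)-R)$, citing Definition~\ref{def:constantcomposition} and Proposition~\ref{prop:constantbounds}. Those only guarantee the existence of a single good subcode of the type class (and Proposition~\ref{prop:constantbounds} only sandwiches $E$ between $E_{ex},E_{rc}$ and $E_{sp}$); they say nothing about partitioning the \emph{entire} type class into exponentially many cells all as good as the best code. That partition is the crux, and the paper gets it from Ahlswede's covering lemma \cite{Ahlswede80} (in the spirit of \cite{AD82}): a single near-optimal code $\mathcal{C}_n$ of type $P_n\to P_X$ is extended by a fixed suffix so that it acquires type $P$ for every $P$ in the neighborhood (hence the inflated length $k_n=\lceil(1+\delta)^2n\rceil$), and then about $e^{k_nR}$ permuted copies of it cover $\mathcal{T}_{k_n}(P)$; disjointifying the cover gives a partition in which every cell inherits $P_{e,\max}(\mathcal{C}_n,P_{Y|X})$. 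This construction also sidesteps a second unjustified step in your sketch: you invoke ``continuity of $E$ in the type argument,'' but the true constant-composition reliability function $E(Q_X,P_{Y|X},R)$ is not known to be continuous in $Q_X$ (only its bounds are); the paper never needs such continuity because every nearby type class is served by essentially the same code of type $\approx P_X$. If you instead fell back on random binning to build the partitions, you would only establish $E_{rc}$, not $E$, and the equality of the theorem would not follow.

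In the converse you correctly identify the obstacle---you must extract a bin that is simultaneously large and has conditional error within a constant factor of $P_e(\varphi_n,P_{XY})$---but the fix you sketch is not correct as stated. The conditional errors of bins enter $P_e$ through a probability-weighted average, so a minimum-error bin has conditional error of order $P_e$ (up to polynomial factors), not $P_e/N$; and in any case the minimum-error bin need not be large, so ``balancing'' the two selections is precisely what remains to be proved. The paper resolves this with an intersection argument: the pairs $(P,i)$ whose local rate $\frac{1}{n}\log\frac{|\mathcal{T}_n(P)|}{|\mathcal{T}_n(P,i)|}$ exceeds $R+2\delta$ carry probability at most $R(\varphi_n,P_{XY})/(R+2\delta)$ (Kraft plus Markov), the pairs with conditional error exceeding $\gamma^*P_e$ carry probability less than $1/\gamma^*$, and the types outside a neighborhood of $P_X$ carry vanishing probability; choosing $\gamma^*$ with $\frac{\gamma^*-1}{\gamma^*}+\frac{\delta}{R+2\delta}>1$ forces the three bad families not to cover everything, so some single bin enjoys all three properties, and after appending a fixed suffix it becomes a constant composition code of type tending to $P_X$, rate at least $(H(P_X)-R-3\delta)/(1+\delta)$, and error at most $\gamma^*P_e(\varphi_n,P_{XY})$, which is what the definition of $E(P_X,P_{Y|X},\cdot)$ requires (the built-in $\delta$-slack of Definition~\ref{def:constantcomposition} makes your appeal to continuity in $R$ unnecessary). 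Your Kraft--Jensen bound on bin size and the concentration argument for the type are fine; the simultaneous selection is the step you left open.
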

\begin{proof}
The proof is divided into two parts. Firstly, we shall show that
$E_v(P_{XY},R)\geq E(P_X,P_{Y|X},H(P_X)-R)$. The main idea is that
one can use a constant composition code $\mathcal{C}_n$ of type
approximately $P_X$ and rate approximately $H(P_X)-R$ to construct
a variable-rate Slepian-Wolf code $\varphi_{n'}(\cdot)$ with
$n'\approx n$, $R(\varphi_{n'},P_{XY})\approx R$, and
$P_e(\varphi_{n'},P_{XY})\leq P_{e,\max}(\mathcal{C}_n,P_{Y|X})$.

By Definition \ref{def:constantcomposition}, for any $\delta>0$,
there exists a sequence of constant composition codes
$\{\mathcal{C}_n\}$ with
$\mathcal{C}_n\subseteq\mathcal{T}_n(P_n)$ for some
$P_n\in\mathcal{P}_n(\mathcal{X})$ such that
\begin{eqnarray*}
&&\lim\limits_{n\rightarrow\infty}\|P_n-P_X\|=0,\\
&&\liminf_{n\rightarrow\infty}R(\mathcal{C}_n)\geq
H(P_X)-R-\delta,\\
&&\limsup_{n\rightarrow\infty}\frac{1}{n}\log
P_{e,\max}(\mathcal{C}_n,P_{Y|X})\geq
E(P_X,P_{Y|X},H(P_X)-R)-\delta.
\end{eqnarray*}
Since $P_X(x)>0$ for all $x\in\mathcal{X}$, we have
\begin{eqnarray*}
\max\limits_{P\in\mathcal{P}_n(\mathcal{X})\cap\mathcal{E}(\delta)}\max\limits_{x}\frac{P_n(x)}{P(x)}\leq
(1+\delta)^2
\end{eqnarray*}
for all sufficiently $n$, where
\begin{eqnarray*}
\mathcal{E}(\delta)=\left\{P\in\mathcal{P}(\mathcal{X}):\max\limits_{x}\frac{P_X(x)}{P(x)}\leq
1+\delta, H(P)\leq H(P_X)+\delta, D(P\|P_X)\leq\delta\right\}.
\end{eqnarray*}
Let $k_n=\lceil(1+\delta)^2n\rceil$. When $n$ is large enough, we
can, for each
$P\in\mathcal{P}_{k_n}(\mathcal{X})\cap\mathcal{E}(\delta)$,
construct a constant composition code $\mathcal{C}'_{k_n}(P)$ of
length $k_n$ and type $P$ by concatenating a fixed sequence in
$\mathcal{X}^{k_n-n}$ to each codeword in $\mathcal{C}_{n}$. It is
easy to see that
\begin{eqnarray}
&&|\mathcal{C}'_{k_n}(P)|=|\mathcal{C}_n|,\label{cardinv}\\
&&P_{e,\max}(\mathcal{C}'_{k_n}(P),P_{Y|X})=P_{e,\max}(\mathcal{C}_n,P_{Y|X})
\label{appendinv}
\end{eqnarray}
for all
$P\in\mathcal{P}_{k_n}(\mathcal{X})\cap\mathcal{E}(\delta)$. One
can readily show by invoking the covering lemma in
\cite{Ahlswede80} that for each
$P\in\mathcal{P}_{k_n}(\mathcal{X})\cap\mathcal{E}(\delta)$, there
exist $L(k_n)$ permutations $\pi_1,\cdots,\pi_{L(k_n)}$ of the
integers $1,\cdots,k_n$ such that
\begin{eqnarray*}
\bigcup_{i=1}^{L(k_n)}\pi_i(\mathcal{C}'_{k_n}(P))=\mathcal{T}_{k_n}(P),
\end{eqnarray*}
where
\begin{eqnarray*}
L(k_n)&\triangleq&\max\limits_{P\in\mathcal{P}_{k_n}(\mathcal{X})\cap\mathcal{E}(\delta)}\left\lfloor|\mathcal{C}'_{k_n}(P)|^{-1}|\mathcal{T}_{k_n}(P)|\log|\mathcal{T}_{k_n}(P)|+1\right\rfloor.
\end{eqnarray*}
In view of (\ref{cardinv}), we can rewrite $L(k_n)$ as
\begin{eqnarray*}
L(k_n)=\max\limits_{P\in\mathcal{P}_{k_n}(\mathcal{X})\cap\mathcal{E}(\delta)}\left\lfloor|\mathcal{C}_n|^{-1}|\mathcal{T}_{k_n}(P)|\log|\mathcal{T}_{k_n}(P)|+1\right\rfloor.
\end{eqnarray*}
Note that
\begin{eqnarray}
P_{e,\max}(\pi_i(\mathcal{C}'_{k_n}(P)),P_{Y|X})=P_{e,\max}(\mathcal{C}'_{k_n}(P),P_{Y|X}),\quad
i=1,2,\cdots,L(k_n). \label{permutationinv}
\end{eqnarray}
Given
$\pi_1(\mathcal{C}'_{k_n}(P)),\cdots,\pi_{L(k_n)}(\mathcal{C}'_{k_n}(P))$,
we can partition $\mathcal{T}_{k_n}(P)$ into $L(k_n)$ disjoint
subsets:
\begin{eqnarray*}
&&\mathcal{T}_{k_n}(P,1)=\pi_1(\mathcal{C}'_{k_n}(P)),\\
&&\mathcal{T}_{k_n}(P,i)=\pi_i(\mathcal{C}'_{k_n}(P))\left\backslash\bigcup_{j=1}^{i-1}\pi_i(\mathcal{C}'_{k_n}(P)),\right.\quad
i=2,\cdots,L(k_n).
\end{eqnarray*}
It is clear that
\begin{eqnarray}
P_{e,\max}(\mathcal{T}_{k_n}(P,i),P_{Y|X})\leq
P_{e,\max}(\pi_i(\mathcal{C}'_{k_n}(P)),P_{Y|X}),\quad
i=1,2,\cdots,L(k_n).\label{partitioninv}
\end{eqnarray}

Now construct a sequence of variable-rate Slepian-Wolf codes
$\{\phi_{k_n}(\cdot)\}$ as follows.
\begin{enumerate}
\item The encoder sends the type of $x^{k_n}$ to the decoder, where each type is uniquely represented by a binary sequence of length
$m_1(k_n)$.

\item If $x^{k_n}\in\mathcal{T}_{k_n}(P)$ for some
$P\notin\mathcal{E}(\delta)$, the encoder sends $x^{k_n}$
losslessly to the decoder, where each $x^n\in\mathcal{T}_{k_n}(P)$
is uniquely represented by a binary sequence of length $m_2(k_n)$.

\item If $x^{k_n}\in\mathcal{T}_{k_n}(P)$ for some
$P\in\mathcal{E}(\delta)$, the encoder finds the set
$\pi_{i^*}(\mathcal{C}'_{k_n}(P))$ that contains $x^{k_n}$ and
sends the index $i^*$ to the decoder, where each index in
$\{1,2,\cdots,L(k_n)\}$ is uniquely represented by a binary
sequence of length $m_3(k_n)$.
\end{enumerate}
Specifically, we choose
\begin{eqnarray*}
&&m_1(k_n)=\lceil\log_2|\mathcal{P}_{k_n}(\mathcal{X})|\rceil,\\
&&m_2(k_n)=\max\limits_{P\in\mathcal{P}_{k_n}(\mathcal{X})}\lceil\log_2|\mathcal{T}_{k_n}(P)|\rceil,\\
&&m_3(k_n)=\lceil\log_2L(k_n)\rceil.
\end{eqnarray*}
Note that
\begin{eqnarray*}
R(\varphi_{k_n},P_{XY})&=&\frac{m_1(k_n)+\theta
m_2(k_n)+(1-\theta)m_3(k_n)}{k_n\log_2e},
\end{eqnarray*}
where
\begin{eqnarray*}
\theta=\sum\limits_{P\in\mathcal{P}_{k_n}(\mathcal{X})\cap\mathcal{E}(\delta)}\mbox{Pr}\{X^{k_n}\in\mathcal{T}_{k_n}(P)\}.
\end{eqnarray*}
It is easy to verify (cf. (\ref{cardinality}), (\ref{typeclass})
and (\ref{typeprobability})) that
\begin{eqnarray*}
&&m_1(k_n)\leq|\mathcal{X}|\log_2(k_n+1)+1,\\
&&m_2(k_n)\leq k_n\log_2|\mathcal{X}|+1,\\
&&\theta\leq(k_n+1)^{|\mathcal{X}|}e^{-k_n\delta}.
\end{eqnarray*}
Therefore, we have
\begin{eqnarray}
\limsup\limits_{n\rightarrow\infty}R(\phi_{k_n},P_{XY})&=&\limsup\limits_{n\rightarrow\infty}\frac{m_3(k_n)}{k_n\log_2e}\nonumber\\
&\leq&\max\limits_{P\in\mathcal{E}(\delta)}H(P)-\frac{1}{(1+\delta)^2}\liminf_{n\rightarrow\infty}R(\mathcal{C}_n)\nonumber\\
&\leq&
H(P_X)+\delta-\frac{H(P_X)-R-\delta}{(1+\delta)^2}.\label{rate1}
\end{eqnarray}
By (\ref{appendinv}), (\ref{permutationinv}), (\ref{partitioninv})
and the construction of $\varphi_{k_n}(\cdot)$, it is clear that
\begin{eqnarray*}
P_e(\varphi_{k_n},P_{XY})&=&\sum\limits_{P\in\mathcal{P}_{k_n}(\mathcal{X})\cap\mathcal{E}(\delta)}\sum\limits_{i=1}^{L(k_n)}\mbox{Pr}\{X^{k_n}\in\mathcal{T}_{k_n}(P,i)\}\mbox{Pr}\{\widehat{X}^n\neq X^n|X^{k_n}\in\mathcal{T}_{k_n}(P,i)\}\\
&\leq&\sum\limits_{P\in\mathcal{P}_{k_n}(\mathcal{X})\cap\mathcal{E}(\delta)}\sum\limits_{i=1}^{L(k_n)}\mbox{Pr}\{X^{k_n}\in\mathcal{T}_{k_n}(P,i)\}P_{e,\max}(\mathcal{T}_{k_n}(P,i),P_{Y|X})\\
&\leq&\sum\limits_{P\in\mathcal{P}_{k_n}(\mathcal{X})\cap\mathcal{E}(\delta)}\sum\limits_{i=1}^{L(k_n)}\mbox{Pr}\{X^{k_n}\in\mathcal{T}_{k_n}(P,i)\}P_{e,\max}(\pi_i(\mathcal{C}'_{k_n}(P)),P_{Y|X})\\
&=&\sum\limits_{P\in\mathcal{P}_{k_n}(\mathcal{X})\cap\mathcal{E}(\delta)}\sum\limits_{i=1}^{L(k_n)}\mbox{Pr}\{X^{k_n}\in\mathcal{T}_{k_n}(P,i)\}P_{e,\max}(\mathcal{C}_n,P_{Y|X})\\
&\leq& P_{e,\max}(\mathcal{C}_n,P_{Y|X}),
\end{eqnarray*}
which implies
\begin{eqnarray}
\limsup\limits_{n\rightarrow\infty}-\frac{1}{k_n}\log
P_e(\varphi_{k_n},P_{XY})&\geq&\limsup\limits_{n\rightarrow\infty}-\frac{1}{k_n}\log
P_{e,\max}(\mathcal{C}_n,P_{Y|X})\nonumber\\
&\geq&\frac{E(P_X,P_{Y|X},H(P_X)-R)-\delta}{(1+\delta)^2}\label{exponent}.
\end{eqnarray}
In view of (\ref{rate1}), (\ref{exponent}) and the fact that
$\delta>0$ is arbitrary, we must have $E_v(P_{XY},R)\geq
E(P_X,P_{Y|X},H(P_X)-R)$ (cf. Definition \ref{def:swvariable}).

Now we proceed to show that $E_v(P_{XY},R)\leq
E(P_X,P_{Y|X},H(P_X)-R)$. The main idea is that one can extract a
constant composition code of type approximately $P_X$ and rate
approximately $H(X)-R$ or greater from a given variable-rate
Slepian-Wolf code $\varphi_n(\cdot)$ of rate approximately $R$
such that the average decoding error probability of this constant
composition code over channel $P_{Y|X}$ is bounded from above by
$\gamma^* P_e(\varphi_n,P_{XY})$, where $\gamma^*$ is a constant
that does not depend on $n$.

By Definition \ref{def:swvariable}, for any $\delta>0$, there
exists a sequence of variable-rate Slepian-Wolf codes
$\{\varphi_n\}$ such that
\begin{eqnarray}
&&\limsup\limits_{n\rightarrow\infty}R(\varphi_n,P_{XY})\leq
R+\delta,\label{rate}\\
&&\limsup_{n\rightarrow\infty}-\frac{1}{n}\log
P_e(\varphi_n,P_{XY})\geq E_v(P_{XY},R)-\delta.\label{error}
\end{eqnarray}
Suppose $\varphi_n(\cdot)$ induces a partition\footnote{The
partition is defined as follows:
$\varphi_n(x^n)=\varphi_n(\widetilde{x}^n)$ if
$x^n,\widetilde{x}^n\in\mathcal{T}_n(P,i)$ for some $i$, and
$\varphi_n(x^n)\neq\varphi_n(\widetilde{x}^n)$ if
$x^n\in\mathcal{T}_n(P,i),\widetilde{x}^n\in\mathcal{T}_n(P,j)$
for $i\neq j$.} of $\mathcal{T}_n(P)$,
$P\in\mathcal{P}_n(\mathcal{X})$, into $N_n(P)$ disjoint subsets
$\mathcal{T}_n(P,1),\cdots,\mathcal{T}_n(P,N_n(P))$. Define
\begin{eqnarray*}
&&\mathcal{F}_n(\delta)=\left\{(P,i):
\frac{1}{n}\log\frac{|\mathcal{T}_n(P)|}{|\mathcal{T}_n(P,i)|}\leq
R+2\delta, P\in\mathcal{P}_n(\mathcal{X}),i=1,2,\cdots,N_n(P)\right\},\\
&&\mathcal{G}_{n}(\gamma)=\left\{(P,i):\mbox{Pr}\{\widehat{X}^n\neq
X^n |X^n\in\mathcal{T}_n(P,i)\}\leq \gamma P_e(\varphi_n,P_{XY}),
P\in\mathcal{P}_n(\mathcal{X}),i=1,2,\cdots,N_n(P)\right\},
\end{eqnarray*}
where $\gamma>0$. One can readily verify that
\begin{eqnarray}
&&\sum\limits_{(P,i)\in\mathcal{F}_n(\delta)}\mbox{Pr}\{X^n\in\mathcal{T}_n(P,i)\}\geq
1-\frac{R(\varphi_n,P_{XY})}{R+2\delta},\label{Fn}\\
&&\sum\limits_{(P,i)\in\mathcal{G}_{n}(\gamma)}\mbox{Pr}\{X^n\in\mathcal{T}_n(P,i)\}\geq\frac{\gamma-1}{\gamma}.\label{Gn}
\end{eqnarray}
Moreover, by (\ref{rate}) and (\ref{Fn}) we have
\begin{eqnarray}
\liminf\limits_{n\rightarrow\infty}\sum\limits_{(P,i)\in\mathcal{F}_n(\delta)}\mbox{Pr}\{X^n\in\mathcal{T}_n(P,i)\}\geq\frac{\delta}{R+2\delta}.\label{An}
\end{eqnarray}

Let $\gamma^*$ be a positive number satisfying
\begin{eqnarray*}
\frac{\gamma^*-1}{\gamma^*}+\frac{\delta}{R+2\delta}>1.
\end{eqnarray*}
Define
\begin{eqnarray*}
&&\mathcal{S}_n(\delta)=\left\{P\in\mathcal{P}_n(\mathcal{X}):
H(P)\geq H(P_X)-\delta, \max\limits_{x}\frac{P(x)}{P_X(x)}\leq
1+\delta\right\},\\
&&\mathcal{D}_n(\delta,\gamma^*)=\{(P,i):(P,i)\in\mathcal{F}_n(\delta)\cap\mathcal{G}_{n}(\gamma^*),P\in\mathcal{S}_n(\delta)\}.
\end{eqnarray*}
It follows from the weak law of large numbers that
\begin{eqnarray}
\lim\limits_{n\rightarrow\infty}\sum_{P\in\mathcal{S}_n(\delta)}\mbox{Pr}\{X^n\in\mathcal{T}_n(P)\}=1.\label{Sn}
\end{eqnarray}
In view of (\ref{Gn}), (\ref{An}) and (\ref{Sn}), we have
\begin{eqnarray*}
&&\liminf\limits_{n\rightarrow\infty}\sum\limits_{(P,i)\in\mathcal{D}_n(\delta,\gamma^*)}\mbox{Pr}\{X^n\in\mathcal{T}_n(P,i)\}\\
&&\geq\liminf\limits_{n\rightarrow\infty}\left\{
1-\left[1-\sum\limits_{(P,i)\in\mathcal{F}_n(\delta)}\mbox{Pr}\{X^n\in\mathcal{T}_n(P,i)\}\right]-\left[1-\sum\limits_{(P,i)\in\mathcal{G}_{n}(\gamma^*)}\mbox{Pr}\{X^n\in\mathcal{T}_n(P,i)\}\right]\right.\\
&&\hspace{0.676in}\left.-\left[1-\sum_{P\in\mathcal{S}_n(\delta)}\mbox{Pr}\{X^n\in\mathcal{T}_n(P)\}\right]\right\}\\
&&\geq\frac{\gamma^*-1}{\gamma^*}+\frac{\delta}{R+2\delta}-1\\
&&>0.
\end{eqnarray*}
Therefore, $\mathcal{D}_n(\delta,\gamma^*)$ is non-empty for all
sufficiently large $n$. Pick an arbitrary $(P^*_n,i^*)$ from
$\mathcal{D}_n(\delta,\gamma^*)$ for each sufficiently large $n$.
We can construct a constant composition code $\mathcal{C}_{m_n}$
of length $m_n=\lceil(1+\delta)n\rceil$ and type $P_{m_n}$ for
some $P_{m_n}\in\mathcal{P}_{m_n}(\mathcal{X})$ by concatenating a
fixed sequence in $\mathcal{X}^{m_n-n}$ to each sequence in
$\mathcal{T}_{n}(P^*_n,i^*)$ such that
\begin{eqnarray}
\lim\limits_{n\rightarrow\infty}\|P_{m_n}-P_X\|=0.\label{type2}
\end{eqnarray}
Note that
\begin{eqnarray}
\liminf\limits_{n\rightarrow\infty}R(\mathcal{C}_{m_n})&=&\liminf\limits_{n\rightarrow\infty}\frac{1}{m_n}\log|\mathcal{T}_n(P^*_n,i^*)|\nonumber\\
&\geq&\liminf\limits_{n\rightarrow\infty}\frac{n}{m_n}\left[\frac{1}{n}\log|\mathcal{T}_n(P^*_n)|-R-2\delta\right]\nonumber\\
&\geq&\frac{H(P_X)-R-3\delta}{1+\delta}.\label{rate2}
\end{eqnarray}
Moreover, since
\begin{eqnarray*}
P_e(\mathcal{C}_{m_n},P_{Y|X})=\mbox{Pr}\{\widehat{X}^n\neq
X^n|X^n\in\mathcal{T}_n(P^*_n,i^*)\}\leq\gamma^*P_e(\varphi_n,P_{XY}),
\end{eqnarray*}
it follows from (\ref{error}) that
\begin{eqnarray}
\limsup\limits_{n\rightarrow\infty}-\frac{1}{m_n}\log
P_e(\mathcal{C}_{m_n},P_{Y|X})\geq\frac{E_v(P_{XY},R)-\delta}{1+\delta}.\label{exponent2}
\end{eqnarray}
In view of (\ref{type2}), (\ref{rate2}), (\ref{exponent2}), and
the fact that $\delta>0$ is arbitrary, we must have
$E_v(P_{XY},R)\leq E(P_X,P_{Y|X},H(P_X)-R)$ (cf. Definition
\ref{def:constantcomposition}). The proof is complete.
\end{proof}

The following result is an immediate consequence of Theorem
\ref{theorem1} and Proposition \ref{prop:constantbounds}.
\begin{corollary}
Define
\begin{eqnarray*}
&&E_{v,ex}(P_{XY},R)=E_{ex}(P_X,P_{Y|X},H(P_X)-R),\\
&&E_{v,rc}(P_{XY},R)=E_{rc}(P_X,P_{Y|X},H(P_X)-R),\\
&&E_{v,sp}(P_{XY},R)=E_{sp}(P_X,P_{Y|X},H(P_X)-R).
\end{eqnarray*}
We have
\begin{enumerate}
\item
$E_v(P_{XY},R)\geq\max\{E_{v,ex}(P_{XY},R),E_{v,rc}(P_{XY},R)\}$;

\item
$E_v(P_{XY},R)\leq E_{v,sp}(P_{XY},R)$ with the possible exception
of $R=H(P_X)-R^{\infty}_{sp}(P_X,P_{Y|X})$ at which point the
inequality not necessarily holds.
\end{enumerate}
\end{corollary}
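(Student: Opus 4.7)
The plan is essentially to combine Theorem \ref{theorem1} with Proposition \ref{prop:constantbounds} by direct substitution; there is no real content beyond bookkeeping. First I would invoke Theorem \ref{theorem1} to rewrite
\begin{eqnarray*}
E_v(P_{XY},R) = E(P_X, P_{Y|X}, H(P_X) - R),
\end{eqnarray*}
which reduces the entire corollary to a statement about the reliability function of channel $P_{Y|X}$ for constant composition codes of type approximately $P_X$, evaluated at the shifted rate $H(P_X) - R$.

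Next I would apply parts 1 and 2 of Proposition \ref{prop:constantbounds} with $Q_X = P_X$, $W_{Y|X} = P_{Y|X}$, and the rate argument replaced by $H(P_X) - R$. Part 1 gives
\begin{eqnarray*}
E(P_X, P_{Y|X}, H(P_X) - R) \geq \max\{E_{ex}(P_X, P_{Y|X}, H(P_X) - R), E_{rc}(P_X, P_{Y|X}, H(P_X) - R)\},
\end{eqnarray*}
which is exactly $\max\{E_{v,ex}(P_{XY},R), E_{v,rc}(P_{XY},R)\}$ by the definitions in the corollary statement. Part 2 gives
\begin{eqnarray*}
E(P_X, P_{Y|X}, H(P_X) - R) \leq E_{sp}(P_X, P_{Y|X}, H(P_X) - R) = E_{v,sp}(P_{XY},R),
\end{eqnarray*}
with the possible exception flagged in Proposition \ref{prop:constantbounds}, namely when the rate argument equals $R^{\infty}_{sp}(P_X, P_{Y|X})$.

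The only point that requires even minor attention is translating the exceptional rate. Since the rate argument is $H(P_X) - R$, the exception $H(P_X) - R = R^{\infty}_{sp}(P_X, P_{Y|X})$ translates into $R = H(P_X) - R^{\infty}_{sp}(P_X, P_{Y|X})$, which matches the statement of the corollary. There is no main obstacle: the substance is entirely contained in Theorem \ref{theorem1} and Proposition \ref{prop:constantbounds}, and the corollary is a one-line consequence of their composition.
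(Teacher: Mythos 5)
Your proposal is correct and matches the paper's reasoning exactly: the paper presents this corollary as an immediate consequence of Theorem \ref{theorem1} combined with Proposition \ref{prop:constantbounds} applied at $Q_X=P_X$, $W_{Y|X}=P_{Y|X}$, and rate $H(P_X)-R$, including the translation of the exceptional rate to $R=H(P_X)-R^{\infty}_{sp}(P_X,P_{Y|X})$. Nothing further is needed.
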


Remark:
\begin{enumerate}
\item
We have $E_{v}(P_{XY},R)=\infty$ for
$R>H(P_X)-R^{\infty}_{ex}(P_X,P_{Y|X})$, and
$E_{v}(P_{XY},R)<\infty$ for
$R<H(P_X)-R^{\infty}_{sp}(P_X,P_{Y|X})$. Therefore,
$H(P_X)-R^{\infty}_{ex}(P_X,P_{Y|X})$ and
$H(P_X)-R^{\infty}_{sp}(P_X,P_{Y|X})$ are respectively the upper
bound and the lower bound on the zero-error rate of variable-rate
Slepian-Wolf coding.

\item In view of (\ref{rcsp}), we have
\begin{eqnarray*}
E_v(P_{XY},R)=E_{v,sp}(P_{XY},R)=E_{sp}(P_X,P_{Y|X},H(P_X)-R)
\end{eqnarray*}
for $R\in[H(X|Y),H(P_X)-R_{cr}(P_{X},P_{Y|X})]$. Note that
\begin{eqnarray*}
E_{v,sp}(P_{XY},R)\geq E_{f,sp}(P_{XY},R)\geq E_{f}(P_{XY},R),
\end{eqnarray*}
where the first inequality is strict unless the minimum in
(\ref{fixedswsp}) is achieved at $Q_X=P_X$, (i.e.,
$P_{X^{(\rho)}}=P_X$, where $P_{X^{(\rho)}}$ is the marginal
distribution of $X^{(\rho)}$ induced by $P_{Y^{(\rho)}}$ and
$P_{X^{(\rho)}|Y^{(\rho)}}$ in (\ref{para1}), (\ref{para2})).
Therefore, variable-rate Slepian-Wolf coding can outperform
fixed-rate Slepian-Wolf coding in terms of rate-error tradeoff.
\end{enumerate}

For $R>H(P_X)-R_{cr}(P_{X},P_{Y|X})$, it is possible to obtain
upper bounds on $E_v(P_{XY},R)$ that are tighter than
$E_{v,sp}(P_{XY},R)$. Let $E_{ex}(P_{Y|X},R)$ and
$E_{sp}(P_{Y|X},R)$ be respectively the expurgated exponent and
the sphere packing exponent of channel $P_{Y|X}$. The
straight-line exponent $E_{sl}(P_{Y|X},R)$ of channel $P_{Y|X}$
\cite{Gallager68} is the smallest linear function of $R$ which
touches the curve $E_{sp}(P_{Y|X},R)$ and also satisfies
\begin{eqnarray*}
E_{sl}(P_{Y|X},0)=E_{ex}(P_{Y|X},0),
\end{eqnarray*}
where $E_{ex}(P_{Y|X},0)$ is assumed to be finite. Let
$R_{sl}(P_{Y|X})$ be the point at which $E_{sl}(P_{Y|X},R)$ and
$E_{sp}(P_{Y|X},R)$ coincide. It is well known \cite{Gallager68}
that $E(P_{Y|X},R)\leq E_{sl}(P_{Y|X},R)$ for
$R\in(0,R_{sl}(P_{Y|X})]$. Since $E(P_X,P_{Y|X},R)\leq
E(P_{Y|X},R)$, it follows from Theorem \ref{theorem1} that
\begin{eqnarray*}
E_{v}(P_{XY},R)\leq E_{sl}(P_{Y|X},H(P_X)-R)
\end{eqnarray*}
for $R\in[\max\{H(P_X)-R_{sl}(P_{Y|X}),0\}, H(P_X))$.

Note that the straight-line exponent holds for arbitrary block
codes; one can obtain further improvement at high rates by
leveraging bounds tailored to constant composition codes. Let
$E^*_{ex}(Q_{X},P_{Y|X},0)$ be the concave upper envelope of
$E_{ex}(Q_{X},P_{Y|X},0)$ considered as a function of $Q_{X}$. In
view of \cite[Excercise 5.21]{CK81B}, we have
\begin{eqnarray*}
E(Q_X,P_{Y|X},R)\leq E^*_{ex}(Q_X,P_{Y|X},0)
\end{eqnarray*}
for any $Q_X\in\mathcal{P}(\mathcal{X})$ and $R>0$. Now it follows
from Theorem \ref{theorem1} that
\begin{eqnarray*}
E_{v}(P_{XY},R)\leq E^*_{ex}(P_X,P_{Y|X},0)
\end{eqnarray*}
for $R<H(P_X)$.

The following theorem provides the second order expansion of
$E_v(P_{XY},R)$ at the Slepian-Wolf limit.
\begin{theorem}
Assuming $R_{cr}(P_{X},P_{X|Y})<I(P_X,P_{Y|X})$ (see Proposition
\ref{prop:cr} for the necessary and sufficient condition), we have
\begin{eqnarray*}
\lim\limits_{r\downarrow
0}\frac{E_v(P_{XY},H(X|Y)+r)}{r^2}=\frac{1}{2}\left[\sum\limits_{x,y}P_{XY}(x,y)\tau^2(x,y)-\sum\limits_{x}P_X(x)\left(\sum\limits_{y}\tau(x,y)P_{Y|X}(y|x)\right)^2\right]^{-1}
\end{eqnarray*}
where $\tau(x,y)=\log P_Y(y)-\log P_{Y|X}(y|x)$.
\end{theorem}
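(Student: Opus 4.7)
The plan is to use Theorem \ref{theorem1} to reduce the statement to the second-order behavior of the sphere-packing exponent near the channel capacity.

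By Theorem \ref{theorem1}, $E_v(P_{XY}, H(X|Y)+r) = E(P_X, P_{Y|X}, I(P_X,P_{Y|X})-r)$. Under the hypothesis $R_{cr}(P_X,P_{Y|X})<I(P_X,P_{Y|X})$, the rate $R' = I(P_X,P_{Y|X})-r$ strictly exceeds $R_{cr}(P_X,P_{Y|X})$ for all sufficiently small $r>0$; combining (\ref{rcsp}) with Proposition \ref{prop:constantbounds} then gives $E(P_X,P_{Y|X},R') = E_{sp}(P_X,P_{Y|X},R')$ in that regime, and the exceptional rate $R^{\infty}_{sp}(P_X,P_{Y|X})$ is also strictly below $I(P_X,P_{Y|X})$ by the equivalent formulations of the hypothesis in Propositions \ref{prop:cr} and \ref{prop:sp}. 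The task therefore becomes to compute $\lim_{r\downarrow 0} r^{-2} E_{sp}(P_X,P_{Y|X}, I(P_X,P_{Y|X})-r)$.

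Parametrize $V_{Y|X}(y|x) = P_{Y|X}(y|x)(1+v(y|x))$ with $\sum_y P_{Y|X}(y|x) v(y|x) = 0$, and write the induced output marginal as $P_Y(y)(1+\overline{v}(y))$ where $\overline{v}(y) = \sum_x P_{X|Y}(x|y) v(y|x)$. A second-order Taylor expansion, using $\sum_y P_{Y|X}(y|x)v(y|x)=0$, $\sum_y P_Y(y)\overline{v}(y)=0$, and $\sum_{x,y}P_{XY}(x,y)v(y|x)\overline{v}(y) = \sum_y P_Y(y)\overline{v}(y)^2$, yields
\begin{eqnarray*}
D(V_{Y|X}\|P_{Y|X}|P_X) &=& \frac{1}{2}\sum_{x,y}P_{XY}(x,y)v(y|x)^2 + O(\eta^3),\\
I(P_X,V_{Y|X}) - I(P_X,P_{Y|X}) &=& -\sum_{x,y}P_{XY}(x,y)v(y|x)\tau(x,y) + O(\eta^2),
\end{eqnarray*}
where $\eta = \max_{x,y}|v(y|x)|$. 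Hence, at leading order, the minimization in (\ref{constsp}) reduces to minimizing $\frac{1}{2}\sum_{x,y}P_{XY}(x,y)v(y|x)^2$ subject to $\sum_y P_{Y|X}(y|x)v(y|x)=0$ for every $x$ and $\sum_{x,y}P_{XY}(x,y)v(y|x)\tau(x,y) = r$. Lagrange multipliers give the optimizer $v^*(y|x) = \lambda(\tau(x,y)-\overline{\tau}(x))$ with $\overline{\tau}(x) = \sum_y P_{Y|X}(y|x)\tau(x,y)$, and the active constraint then fixes $\lambda = r/S$ where $S = \sum_{x,y}P_{XY}(x,y)\tau(x,y)^2 - \sum_x P_X(x)\overline{\tau}(x)^2$ is exactly the bracketed quantity in the theorem statement; the optimum of this quadratic program therefore equals $r^2/(2S)$.

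The main obstacle is upgrading this formal expansion to rigorous asymptotics: one must show that the true minimizer $V^*_{Y|X}$ in (\ref{constsp}) satisfies $\max_{x,y}|V^*_{Y|X}(y|x)/P_{Y|X}(y|x)-1| = O(\sqrt{r})$ as $r\downarrow 0$, so that the cubic remainder contributes only $o(r^2)$ and the formal quadratic approximation delivers the correct leading coefficient. This follows from strict convexity of $D(\cdot\|P_{Y|X}|P_X)$ at $P_{Y|X}$ together with standard sensitivity analysis for convex programs. The nondegeneracy $S>0$ needed for the leading coefficient to be finite is guaranteed by the hypothesis: if $S=0$, then $\tau(x,y)$ would depend only on $x$ on the support of $P_{XY}$, which by Proposition \ref{prop:cr} is equivalent to $R_{cr}(P_X,P_{Y|X}) = I(P_X,P_{Y|X})$.
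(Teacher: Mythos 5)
Your proposal is correct and follows essentially the same route as the paper: reduce via Theorem \ref{theorem1} and the coincidence of the random-coding and sphere-packing exponents above the critical rate to the second-order behavior of $E_{sp}(P_X,P_{Y|X},\cdot)$ at $I(P_X,P_{Y|X})$, then Taylor-expand and solve the resulting quadratic program by Lagrange multipliers, yielding the same optimizer and constant. The only differences are cosmetic (multiplicative perturbation $v$ instead of the paper's additive $\Delta$) plus your explicit flagging of the $O(\sqrt{r})$ sensitivity estimate, a point the paper itself treats informally.
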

Remark: If $R_{cr}(P_{X},P_{Y|X})=I(P_X,P_{Y|X})$, then we have
$E_{v,rc}(P_{XY},R)=R-H(X|Y)$ for $R\geq H(X|Y)$, which implies
\begin{eqnarray*}
\lim\limits_{r\downarrow
0}\frac{E_v(P_{XY},H(X|Y)+r)}{r^2}=\infty.
\end{eqnarray*}
It is also worth noting that the second order expansion of
$E_v(P_{XY},R)$ at the Slepian-Wolf limit yields the
redundancy-error tradeoff constant of variable-rate Slepian-Wolf
coding derived in \cite{HLYJC08}.

\begin{proof}
Since $R_{cr}(P_{X},P_{X|Y})<I(P_X;P_{Y|Y})$, it follows that
$H(X|Y)+r\in(H(X|Y),H(P_X)-R_{cr}(P_{X},P_{Y|X}))$ when $r$
$(r>0)$ is sufficiently close to zero. In this case, we have
\begin{eqnarray*}
\frac{E_v(P_{XY},H(X|Y)+r)}{r^2}&=&\frac{E_{sp}(P_X,P_{Y|X},I(P_X,P_{Y|X})-r)}{r^2}\\
&=&\min\limits_{Q_{Y|X}:I(P_X,Q_{Y|X})\leq
I(P_X,P_{Y|X})-r}\frac{D(Q_{Y|X}\|P_{Y|X}|P_X)}{r^2}\\
&=&\min\limits_{Q_{Y|X}:I(P_X,Q_{Y|X})=I(P_X,P_{Y|X})-r}\frac{D(Q_{Y|X}\|P_{Y|X}|P_X)}{r^2},
\end{eqnarray*}
where the last equality follows from the fact that
$E_{sp}(P_X,P_{Y|X},R)$ is a strictly decreasing convex function
of $R$ for $R\in(R^{\infty}_{sp}(P_X,P_{Y|X}),I(P_X,P_{Y|X})]$.

Let $\Delta(x,y)=Q_{Y|X}(y|x)-P_{Y|X}(y|x)$ for $x\in\mathcal{X}$,
$y\in\mathcal{Y}$. Let $\Delta(y)=\sum_xP_X(x)\Delta(x,y)$ for
$y\in\mathcal{Y}$. By the Taylor expansion,
\begin{eqnarray*}
I(P_X,Q_{Y|X})&=&\sum\limits_{x,y}P_X(x)(P_{Y|X}(y|x)+\Delta(x,y))\log(P_{Y|X}(y|x)+\Delta(x,y))\\
&&-\sum\limits_{y}(P_{Y}(y)+\Delta(y))\log(P_Y(y)+\Delta(y))\\
&=&\sum\limits_{x,y}P_X(x)(P_{Y|X}(y|x)+\Delta(x,y))\left(\log P_{Y|X}(y|x)+\frac{\Delta(x,y)}{P_{Y|X}(y|x)}+o(\Delta(x,y))\right)\\
&&-\sum\limits_{y}(P_{Y}(y)+\Delta(y))\left(\log
P_Y(y)+\frac{\Delta(y)}{P_Y(y)}+o(\Delta(y))\right)\\
&=&I(P_X,P_{Y|X})-\sum\limits_{y}(\Delta(y)+\Delta(y)\log P_Y(y)+o(\Delta_y))\\
&&+\sum\limits_{x,y}P_X(x)(\Delta(x,y)+\Delta(x,y)\log
P_{Y|X}(y|x)+o(\Delta(x,y)))
\end{eqnarray*}
and
\begin{eqnarray*}
&&D(Q_{Y|X}\|P_{Y|X}|P_X)\\
&=&\sum\limits_{x,y}P_{X}(x)Q_{Y|X}(y|x)\log\frac{Q_{Y|X}(y|x)}{P_{Y|X}(y|x)}\\
&=&\sum\limits_{x,y}P_{X}(x)(P_{Y|X}(y|x)+\Delta(x,y))\log\left(1+\frac{\Delta(x,y)}{P_{Y|X}(y|x)}\right)\\
&=&\sum\limits_{x,y}P_{X}(x)(P_{Y|X}(y|x)+\Delta(x,y))\left(\frac{\Delta(x,y)}{P_{Y|X}(y|x)}-\frac{\Delta^2(x,y)}{2P^2_{Y|X}(y|x)}+o(\Delta^2(x,y))\right)\\
&=&\sum\limits_{x,y}P_{X}(x)\left(\frac{\Delta^2(x,y)}{2P_{Y|X}(y|x)}+o(\Delta^2(x,y))\right).
\end{eqnarray*}
Here $f(z)=o(z)$ means $\lim_{z\rightarrow 0}\frac{f(z)}{z}=0$.

As $r\downarrow 0$, we have $\Delta(y)\rightarrow 0$,
$\Delta(x,y)\rightarrow 0$ for all $x\in\mathcal{X},
y\in\mathcal{Y}$. Therefore, by ignoring the high order terms
which do not affect the limit, we get
\begin{eqnarray}
\lim\limits_{r\downarrow
0}\frac{E_v(P_{XY},H(X|Y)+r)}{r^2}=\lim\limits_{r\downarrow
0}\min\sum\limits_{x,y}\frac{P_X(x)\Delta^2(x,y)}{2P_{Y|X}(y|x)r^2}\label{minimization}
\end{eqnarray}
where the minimization is over $\Delta(x,y)$ ($x\in\mathcal{X},
y\in\mathcal{Y}$) subject to the constraints
\begin{enumerate}
\item $\sum_y\Delta(x,y)=0$ for all $x\in\mathcal{X}$;

\item $\sum_{x,y}P_X(x)\tau(x,y)\Delta(x,y)=r$.
\end{enumerate}
Introduce the Lagrange multipliers $\alpha(x)$
$(x\in\mathcal{X})$, $\beta$ for these constraints, and define
\begin{eqnarray*}
G=\sum\limits_{x,y}\frac{P_X(x)\Delta^2(x,y)}{2P_{Y|X}(y|x)}-\sum\limits_{x,y}\alpha(x)\Delta(x,y)-\beta\sum_{x,y}P_X(x)\tau(x,y)\Delta(x,y).
\end{eqnarray*}
The Karush-Kuhn-Tucker conditions yield
\begin{eqnarray*}
\frac{\partial G}{\partial\Delta(x,y)}=-\alpha(x)-\beta
P_X(x)\tau(x,y)+\frac{P_X(x)\Delta(x,y)}{P_{Y|X}(y|x)}=0,\quad
x\in\mathcal{X}, y\in\mathcal{Y}.
\end{eqnarray*}
Therefore, we have
\begin{eqnarray}
\Delta(x,y)=\beta\tau(x,y)P_{Y|X}(y|x)+\frac{P_{Y|X}(y|x)}{P_X(x)}\alpha(x).
\label{deltaxy}
\end{eqnarray}
Substituting (\ref{deltaxy}) into constraint 1), we obtain
\begin{eqnarray*}
\alpha(x)=-\beta P_X(x)\sum\limits_{y}\tau(x,y)P_{Y|X}(y|x)
\end{eqnarray*}
which, together with (\ref{deltaxy}), yields
\begin{eqnarray}
\Delta(x,y)=\beta\tau(x,y)P_{Y|X}(y|x)-\beta
P_{Y|X}(y|x)\sum\limits_{y'}\tau(x,y')P_{Y|X}(y'|x).\label{deltaxybeta}
\end{eqnarray}
Therefore, we have
\begin{eqnarray}
&&\sum\limits_{x,y}\frac{P_X(x)\Delta^2_{xy}}{2P_{Y|X}(y|x)}\nonumber\\
&&=\frac{\beta^2}{2}\sum\limits_{x,y}P_{XY}(x,y)\left[\tau(x,y)-\sum\limits_{y'}\tau(x,y')P_{Y|X}(y'|x)\right]^2\nonumber\\
&&=\frac{\beta^2}{2}\sum\limits_{x,y}P_{XY}(x,y)\left[\tau^2(x,y)-2\tau(x,y)\sum\limits_{y'}\tau(x,y')P_{Y|X}(y'|x)+\left(\sum\limits_{y'}\tau(x,y')P_{Y|X}(y'|x)\right)^2\right]\nonumber\\
&&=\frac{\beta^2}{2}\left[\sum\limits_{x,y}P_{XY}(x,y)\tau^2(x,y)-\sum\limits_{x}P_X(x)\left(\sum\limits_{y}\tau(x,y)P_{Y|X}(y|x)\right)^2\right].\label{numerator}
\end{eqnarray}
Constraint 2) and (\ref{deltaxybeta}) together yield
\begin{eqnarray}
\frac{r^2}{\beta^2}&=&\frac{1}{\beta^2}\left(\sum\limits_{x,y}P_X(x)\tau(x,y)\Delta(x,y)\right)^2\nonumber\\
&=&\left[\sum\limits_{x,y}P_X(x)\tau(x,y)\left(\tau(x,y)P_{Y|X}(y|x)-P_{Y|X}(y|x)\sum\limits_{y'}\tau(x,y')P_{Y|X}(y'|x)\right)\right]^2\nonumber\\
&=&\left[\sum\limits_{x,y}P_{XY}(x,y)\tau^2(x,y)-\sum\limits_{x}P_X(x)\left(\sum\limits_{y}\tau(x,y)P_{Y|X}(y|x)\right)^2\right]^2.\nonumber\\\label{denominator}
\end{eqnarray}
The proof is complete by substituting (\ref{numerator}) and
(\ref{denominator}) back into (\ref{minimization}).
\end{proof}

\section{Variable-Rate Slepian-Wolf Coding: Below the Slepian-Wolf Limit}\label{below}

\begin{definition}\label{def:varcorrectexp}
Given a joint probability distribution $P_{XY}$, we say a correct
decoding exponent $E^c\geq 0$ is achievable with variable-rate
Slepian-Wolf codes at rate $R$ if for any $\delta>0$, there exists
a sequence of variable-rate Slepian-Wolf codes $\{\varphi_n\}$
such that
\begin{eqnarray*}
&&\limsup\limits_{n\rightarrow\infty}R(\varphi_n,P_{XY})\leq R+\delta,\\
&&\liminf\limits_{n\rightarrow\infty}-\frac{1}{n}\log
P_c(\varphi_n,P_{XY})\leq E^c+\delta.
\end{eqnarray*}
The smallest achievable correct decoding exponent at rate $R$ is
denoted by $E^c_v(P_{XY},R)$.
\end{definition}

In view of Theorem \ref{theorem1}, it is tempting to conjecture
that $E^c_v(P_{XY},R)=E^c(P_X,P_{Y|X},H(P_X)-R)$. It turns out
this is not true. We shall show that $E^c_v(P_{XY},R)=0$ for all
$R$. Actually we have a stronger result --- the correct decoding
probability of variable-rate Slepian-Wolf coding can be bounded
away from zero even when $R<H(X|Y)$. This is in sharp contrast
with fixed-rate Slepian-Wolf coding for which the correct decoding
probability decays to zero exponentially fast if the rate is below
the Slepian-Wolf limit. To make the statement more precise, we
need the following definition.

\begin{definition}\label{def:varcorrectpro}
Given a joint probability distribution $P_{XY}$, we say a correct
decoding probability $P_{c,v}(P_{XY},R)$ is achievable with
variable-rate Slepian-Wolf codes at rate $R$ if for any
$\delta>0$, there exists a sequence of variable-rate Slepian-Wolf
codes $\{\varphi_n\}$ such that
\begin{eqnarray*}
&&\limsup\limits_{n\rightarrow\infty}R(\varphi_n,P_{XY})\leq R+\delta,\\
&&\limsup\limits_{n\rightarrow\infty}P_c(\varphi_n,P_{XY})\geq
P_{c,v}(P_{XY},R)-\delta.
\end{eqnarray*}
The largest achievable correct decoding probability at rate $R$ is
denoted by $P^{\max}_{c,v}(P_{XY},R)$.
\end{definition}

\begin{theorem}
$P^{\max}_{c,v}(P_{XY},R)=\frac{R}{H(X|Y)}$ for $R\in(0, H(X|Y)]$.
\end{theorem}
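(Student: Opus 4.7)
The plan is to prove the two bounds separately: an \emph{achievability} construction showing $P^{\max}_{c,v}(P_{XY},R) \ge R/H(X|Y)$ by concentrating all of the code's rate on a carefully chosen high-probability subset of $\mathcal{X}^n$, and a \emph{converse} showing $P^{\max}_{c,v}(P_{XY},R) \le R/H(X|Y)$ by combining Markov's inequality on the codeword length with the conditional asymptotic equipartition property (AEP).

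For achievability, fix a small $\epsilon > 0$ and set $\alpha = R/(H(X|Y)+\epsilon)$, which lies in $(0,1)$ for $R\in(0,H(X|Y)]$. By the Slepian-Wolf theorem there exists, for each sufficiently large $n$, a fixed-rate Slepian-Wolf code $\phi_n$ of rate $H(X|Y)+\epsilon$ nats/symbol with $P_e(\phi_n,P_{XY})\to 0$. Markov applied to the conditional error $p_e(x^n)=\mbox{Pr}\{\widehat{X}^n\neq x^n|X^n=x^n\}$ shows that the ``reliable set'' $G_n=\{x^n:p_e(x^n)\le \sqrt{P_e(\phi_n,P_{XY})}\}$ satisfies $\mbox{Pr}\{X^n\in G_n\}\to 1$; pick any $\mathcal{A}_n\subseteq G_n$ with $\mbox{Pr}\{X^n\in\mathcal{A}_n\}=\alpha+o(1)$. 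Define $\varphi_n(x^n)$ to be the bit ``$1$'' followed by the $\lceil n(H(X|Y)+\epsilon)\log_2 e\rceil$-bit index $\phi_n(x^n)$ when $x^n\in\mathcal{A}_n$, and the single bit ``$0$'' otherwise; this is a binary prefix code. Its expected codeword length is at most $2+\alpha n(H(X|Y)+\epsilon)\log_2 e$ bits, so $\limsup_n R(\varphi_n,P_{XY})\le \alpha(H(X|Y)+\epsilon)=R$. Upon seeing a leading ``$1$'', the decoder invokes $\phi_n$'s MAP decoder and recovers $x^n\in\mathcal{A}_n$ correctly with probability $1-o(1)$, yielding $\liminf_n P_c(\varphi_n,P_{XY})\ge\alpha-o(1)$. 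Letting $\epsilon\downarrow 0$ completes the achievability.

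For the converse, let $\{\varphi_n\}$ achieve correct decoding probability $p$ at rate $R$; then for any $\delta>0$ and all $n$ large, $\mathbb{E}l(\varphi_n(X^n))\le n(R+\delta)\log_2 e$ bits. Fix an auxiliary $\epsilon>0$, set the threshold $t_n=n(H(X|Y)-2\epsilon)\log_2 e$, and decompose $P_c(\varphi_n,P_{XY})=p_{\mathrm{short}}+p_{\mathrm{long}}$ by whether $l(\varphi_n(X^n))\le t_n$. Markov's inequality gives $p_{\mathrm{long}}\le n(R+\delta)\log_2 e/t_n=(R+\delta)/(H(X|Y)-2\epsilon)$. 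For $p_{\mathrm{short}}$, Kraft's inequality forces at most $2^{t_n}$ distinct codewords of length $\le t_n$, so for each $y^n$ at most $2^{t_n}$ sequences $x^n$ can satisfy both $\widehat{X}^n(\varphi_n(x^n),y^n)=x^n$ and $l(\varphi_n(x^n))\le t_n$. The conditional AEP---the weak law of large numbers applied to $-\log_2 P_{X|Y}(X^n|Y^n)$---yields $\mbox{Pr}\{P_{X|Y}(X^n|Y^n)>2^{-n(H(X|Y)-\epsilon)\log_2 e}\}\to 0$, and restricting to this typical event bounds $p_{\mathrm{short}}\le o(1)+2^{t_n}\cdot 2^{-n(H(X|Y)-\epsilon)\log_2 e}=o(1)+2^{-n\epsilon\log_2 e}$. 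Taking $n\to\infty$, then $\epsilon\downarrow 0$, then $\delta\downarrow 0$ delivers $p\le R/H(X|Y)$.

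The main obstacle is the converse's choice of the threshold $t_n$: it must be large enough that the Markov bound on long codewords is tight, yet strictly below $nH(X|Y)\log_2 e$ so that the Kraft count $2^{t_n}$ is dwarfed by the typical inverse-probability scale $2^{nH(X|Y)\log_2 e}$ of $P_{X|Y}(X^n|Y^n)$. The setting $t_n=n(H(X|Y)-2\epsilon)\log_2 e$, together with the order $n\to\infty$, $\epsilon\downarrow 0$, $\delta\downarrow 0$ of limits, resolves this tension and produces the exact constant $R/H(X|Y)$.
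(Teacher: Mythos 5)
Your proof is correct, but it takes a genuinely different route from the paper in both directions. For achievability, the paper works at the level of type classes: it selects a family $\mathcal{S}_n$ of types near $P_X$ whose total probability is tuned to approximately $R/H(X|Y)$ (possible because, by Stirling, each such type class has probability $O(n^{-(|\mathcal{X}|-1)/2})$), and then partitions each selected class into roughly $e^{n(H(X|Y)+\delta)}$ cells via the Csisz\'ar--K\"orner graph-decomposition theorem, obtaining exponentially small conditional error on the protected classes; you instead take an off-the-shelf fixed-rate Slepian--Wolf code at rate $H(X|Y)+\epsilon$, extract a reliable set of probability tending to one by Markov's inequality, truncate it to probability $R/(H(X|Y)+\epsilon)$, and protect only that set behind a one-bit flag --- more elementary, since vanishing (rather than exponentially small) error suffices here. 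Two small points you should make explicit: choosing $\mathcal{A}_n$ with probability $\alpha+o(1)$ is possible because $\max_{x^n}P_{X^n}(x^n)=\left(\max_x P_X(x)\right)^n\rightarrow 0$ (nondegeneracy of $P_X$ follows from $H(X|Y)>0$), and since the paper defines $P_c(\varphi_n,P_{XY})$ through the MAP decoder of $\varphi_n$, you should note that this decoder can only do better than the flag-plus-$\phi_n$ decoder you describe. For the converse, the paper again argues type by type: it introduces the rate $r(\mathcal{T}_n(P))$ allocated to each class, bounds the fraction of each class lying in cells of size at least $e^{-n(H(X|Y)-\delta)}|\mathcal{T}_n(P)|$ by $r(\mathcal{T}_n(P))/(H(X|Y)-\delta)$, and eliminates the remaining cells using the Dueck--K\"orner correct-decoding exponent (strong converse) for constant composition codes at rates above mutual information; your converse replaces all of this with Markov's inequality on codeword length, a Kraft-type count of at most $2^{t_n}$ correctly decodable short-codeword sequences per $y^n$, and the conditional AEP. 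Your converse is shorter, type-free, and essentially one-shot; the paper's meshes with the constant-composition machinery used throughout (which is what connects the result to $E^c(Q_X,P_{Y|X},\cdot)$), but both yield exactly the constant $R/H(X|Y)$.
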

Remark: It is obvious that $P^{\max}_{c,v}(P_{XY},R)=1$ for
$R>H(X|Y)$. Moreover, since $P^{\max}_{c,v}(P_{XY},R)$ is a
monotonically increasing function of $R$, it follows that
$P^{\max}_{c,v}(P_{XY},0)=0$.

\begin{proof}
The intuition underlying the proof is as follows. Assume the rate
is below the Slepian-Wolf limit, i.e., $R<H(X|Y)$. For each type
$P$ in the neighborhood of $P_{X}$, the rate allocated to the type
class $\mathcal{T}_n(P)$ should be no less than $H(X|Y)$ in order
to correctly decode the sequences in $\mathcal{T}_n(P)$. However,
since almost all the probability are captured by the type classes
whose types are in the neighborhood of $P_X$, there is no enough
rate to protect all of them. Note that if the rate is evenly
allocated among these type classes, none of them can get enough
rate; consequently, the correct decoding probability goes to zero.
A good way is to protect only a portion of them to accumulate
enough rate. Specifically, we can protect $\frac{R}{H(X|Y)}$
fraction of these type classes so that the rate allocated to each
of them is about $H(X|Y)$ and leave the remaining type classes
unprotected. It turns out this strategy achieves the maximum
correct decoding probability as the block length $n$ goes to
infinity. Somewhat interestingly, although $E^c_v(P_{XY},R)\neq
E^c(P_X,P_{Y|X},H(P_X)-R)$, the function $E^c(P_X,P_{Y|X},\cdot)$
does play a fundamental role in establishing the correct result.

The proof is divided into two parts. Firstly, we shall show that
$P^{\max}_{c,v}(P_{XY},R)\geq\frac{R}{H(X|Y)}$. For any
$\epsilon>0$, define
\begin{eqnarray*}
\mathcal{U}(\epsilon)=\left\{P\in\mathcal{P}(\mathcal{X}):\|P-P_X\|\leq\epsilon\right\}.
\end{eqnarray*}
Since $P_X(x)>0$ for all $x\in\mathcal{X}$, we can choose
$\epsilon$ small enough so that
\begin{eqnarray*}
q_{\min}(\epsilon)\triangleq\min\limits_{P\in\mathcal{U}(\epsilon),x\in\mathcal{X}}P(x)>0.
\end{eqnarray*}
Using Stirling's approximation
\begin{eqnarray*}
\sqrt{2\pi
m}\left(\frac{m}{e}\right)^{m}e^{\frac{1}{12m+1}}<m!<\sqrt{2\pi
m}\left(\frac{m}{e}\right)^{m}e^{\frac{1}{12m}},
\end{eqnarray*}
we have, for any
$P\in\mathcal{U}(\epsilon)\cap\mathcal{P}_n(\mathcal{X})$,
\begin{eqnarray*}
\mbox{Pr}(X^n\in\mathcal{T}_n(P))&=&\frac{n!}{\prod_{x}
(nQ_{X}(x))!}\prod\limits_{x}\left[P_X(x)\right]^{nP(x)}\\
&\leq&\frac{\sqrt{2\pi n}e^{\frac{1}{12n}}}{\prod_{x}\sqrt{2\pi
nP(x)}}e^{-nD(P\|P_X)}\\
&\leq&\frac{\sqrt{2\pi}e^{\frac{1}{12n}}}{\prod_{x}\sqrt{2\pi
P(x)}}n^{-\frac{|\mathcal{X}|-1}{2}}\\
&\leq&\frac{\sqrt{2\pi}e^{\frac{1}{12n}}}{\prod_{x}\sqrt{2\pi
q_{\min}(\epsilon)}}n^{-\frac{|\mathcal{X}|-1}{2}},
\end{eqnarray*}
which implies that $\mbox{Pr}(X^n\in\mathcal{T}_n(P))$ converges
uniformly to zero as $n\rightarrow\infty$ for all
$P\in\mathcal{U}(\epsilon)\cap\mathcal{P}_n(\mathcal{X})$.
Moreover, it follows from the weak law of large numbers that
\begin{eqnarray*}
\lim\limits_{n\rightarrow\infty}\sum\limits_{P\in\mathcal{U}(\epsilon)\cap\mathcal{P}_n(\mathcal{X})}\mbox{Pr}(X^n\in\mathcal{T}_n(P))=1
\end{eqnarray*}
Therefore, for any $\delta>0$, $R\in(0,H(X|Y)]$, and sufficiently
large $n$, we can find a set
$\mathcal{S}_n\subseteq\mathcal{U}(\epsilon)\cap\mathcal{P}_n(\mathcal{X})$
such that
\begin{eqnarray*}
\frac{R}{H(X|Y)}-\delta\leq\sum\limits_{P\in\mathcal{S}_n}\mbox{Pr}(X^n\in\mathcal{T}_n(P))\leq\frac{R}{H(X|Y)}.
\end{eqnarray*}

Now consider a sequence of variable-rate Slepian-Wolf codes
$\{\varphi_n(\cdot)\}$ specified as follows.
\begin{enumerate}
\item The encoder sends of type of $X^n$ to the decoder, where
each type is uniquely represented by a binary sequence of length
$\lceil\log_2|\mathcal{P}_n(\mathcal{X})|\rceil$.

\item For each
$P\in\mathcal{S}_n$, the encoder partitions the type class
$\mathcal{T}_n(P)$ into $L_n$ subsets
$\mathcal{T}_n(P,1),\mathcal{T}_n(P,2),\cdots,\mathcal{T}_n(P,L_n)$.
If $X^n\in\mathcal{T}_n(P)$ for some $P\in\mathcal{S}_n$, the
encoder finds the subset $\mathcal{T}_n(P,i^*)$ that contains
$X^n$ and sends the index $i^*$ to the decoder, where each index
in $\{1,2,\cdots,L_n\}$ is uniquely represented by a binary
sequence of length $\lceil\log_2|L_n|\rceil$.

\item The remaining type classes are left uncoded.
\end{enumerate}

Specifically, we let
\begin{eqnarray*}
L_n=\left\lceil\left(2(n+1)^{|\mathcal{X}|^2}e^{n(H(X|Y)+\delta)}\right)\right\rceil.
\end{eqnarray*}
It follows from \cite[Theorem 2]{CK81} that for each
$P\in\mathcal{S}_n$, it is possible to partition the type class
$\mathcal{T}_n(P)$ into $L_n$ disjoint subsets
$\mathcal{T}_n(P,1),\mathcal{T}_n(P,2),\cdots,\mathcal{T}_n(P,L_n)$
so that
\begin{eqnarray*}
-\frac{1}{n}\log\mbox{Pr}(\widehat{X}^n\neq
X^n|X^n\in\mathcal{T}_n(P))\geq\min\limits_{Q_X\in\mathcal{U}(\epsilon)}\left[E_{rc}(Q_X,P_{Y|X},H(Q_X)-H(X|Y)-\delta)-\epsilon\right]
\end{eqnarray*}
uniformly for all $P\in\mathcal{S}_n$ when $n$ is sufficiently
large. In view of the fact that
$E_{rc}(P_X,P_{Y|X},I(P_X,P_{Y|X})-\delta)>0$ and that
$E_{rc}(Q_X,P_{Y|X},R)$ as a function of the pair $(Q_X,R)$ is
uniformly equicontinuous, we have
\begin{eqnarray*}
\min\limits_{Q_X\in\mathcal{U}(\epsilon)}\left[E_{rc}(Q_X,P_{Y|X},H(Q_X)-H(X|Y)-\delta)-\epsilon\right]\triangleq\kappa_1>0
\end{eqnarray*}
for sufficiently small $\epsilon$.

For this sequence of constructed variable-rate Slepian-wolf codes
$\{\varphi_n(\cdot)\}$, it can be readily verified that
\begin{eqnarray*}
\limsup\limits_{n\rightarrow\infty}R(\varphi_n,P_{XY})&=&\limsup\limits_{n\rightarrow\infty}\frac{1}{n\log_2e}\left[\lceil\log_2|\mathcal{P}_n(\mathcal{X})|\rceil+\sum\limits_{P\in\mathcal{S}_n}\mbox{Pr}\{X^n\in\mathcal{T}_n(P)\}L_n\right]\\
&\leq&\frac{R}{H(X|Y)}(H(X|Y)+\delta)
\end{eqnarray*}
and
\begin{eqnarray*}
\limsup\limits_{n\rightarrow\infty}P_c(\varphi_n,R)&\geq&\limsup\limits_{n\rightarrow\infty}\sum\limits_{P\in\mathcal{S}_n}\mbox{Pr}\{X^n\in\mathcal{T}_n(P)\}\left[1-\mbox{Pr}\{\widehat{X}^n\neq
X^n|X^n\in\mathcal{T}_n(P)\}\right]\\
&\geq&\limsup\limits_{n\rightarrow\infty}\sum\limits_{P\in\mathcal{S}_n}\mbox{Pr}\{(X^n\in\mathcal{T}_n(P)\}\left(1-e^{-n\kappa_1}\right)\\
&\geq&\frac{R}{H(X|Y)}-\delta.
\end{eqnarray*}
Since $\delta>0$ is arbitrary, it follows from Definition
\ref{def:varcorrectpro} that
$P^{\max}_{c,v}(P_{XY},R)\geq\frac{R}{H(X|Y)}$.

Now we proceed to prove the other direction. It follows from
Definition \ref{def:varcorrectpro} that for any $\delta>0$, there
exists a sequence of variable-rate Slepian-Wolf codes
$\{\varphi_n(\cdot)\}$ with
\begin{eqnarray*}
&&\limsup\limits_{n\rightarrow\infty}R(\varphi_n,P_{XY})\leq R+\delta,\\
&&\limsup\limits_{n\rightarrow\infty}P_c(\varphi_n,P_{XY})\geq
P^{\max}_{c,v}(R)-\delta.
\end{eqnarray*}
Define
\begin{eqnarray*}
r(\mathcal{T}_n(P))=\frac{1}{|\mathcal{T}_n(P)|n\log_2e}\sum\limits_{x^n\in\mathcal{T}_n(P)}l(\varphi_n(x^n)),\quad
P\in\mathcal{P}_n(\mathcal{X}).
\end{eqnarray*}
Since
$R(\varphi_n,P_{XY})=\sum_{P\in\mathcal{P}_n(\mathcal{X})}\mbox{Pr}\{X^n\in\mathcal{T}_n(P)\}r(\mathcal{T}_n(P))$,
we can interpret $r(\mathcal{T}_n(P))$ as the rate allocated to
the type class $\mathcal{T}_n(P)$.

For each
$P\in\mathcal{U}(\epsilon)\cap\mathcal{P}_n(\mathcal{X})$, suppose
$\varphi_n(\cdot)$ partitions the type class $\mathcal{T}_n(P)$
into $N(P)$ disjoint subsets
$\mathcal{T}_n(P,1),\cdots,\mathcal{T}_n(P,N(P))$ (i.e.,
$\varphi_n(x^n)=\varphi_n(\widetilde{x}^n)$ if
$x^n,\widetilde{x}^n\in\mathcal{T}_n(P,i)$ for some $i$, and
$\varphi_n(x^n)\neq\varphi_n(\widetilde{x}^n)$ if
$x^n\in\mathcal{T}_n(P,i),\widetilde{x}^n\in\mathcal{T}_n(P,j)$
for $i\neq j$). Define
\begin{eqnarray*}
&&\mathcal{I}_n(P,\delta)=\left\{i:\frac{1}{n}\log\frac{|\mathcal{T}_n(P)|}{|\mathcal{T}_n(P,i)|}\leq
H(X|Y)-\delta, i=1,2,\cdots,N(P) \right\},\\
&&\mathcal{I}^c_n(P,\delta)=\left\{i:\frac{1}{n}\log\frac{|\mathcal{T}_n(P)|}{|\mathcal{T}_n(P,i)|}>
H(X|Y)-\delta, i=1,2,\cdots,N(P) \right\}.
\end{eqnarray*}
Note that
\begin{eqnarray*}
r(\mathcal{T}_n(P))&\geq&\frac{1}{n}\sum\limits_{i=1}^{N(P)}\frac{|\mathcal{T}_n(P,i)|}{|\mathcal{T}_n(P)|}\log\frac{|\mathcal{T}_n(P)|}{|\mathcal{T}_n(P,i)|}\\
&\geq&\frac{1}{n}\sum\limits_{i\in\mathcal{I}^c_n(P,\delta)}\frac{|\mathcal{T}_n(P,i)|}{|\mathcal{T}_n(P)|}\log\frac{|\mathcal{T}_n(P)|}{|\mathcal{T}_n(P,i)|}\\
&\geq&(H(X|Y)-\delta)\sum\limits_{i\in\mathcal{I}^c_n(P,\delta)}\frac{|\mathcal{T}_n(P,i)|}{|\mathcal{T}_n(P)|},
\end{eqnarray*}
which implies
\begin{eqnarray*}
\sum\limits_{i\in\mathcal{I}_n(P,\delta)}\frac{|\mathcal{T}_n(P,i)|}{|\mathcal{T}_n(P)|}\geq
1-\frac{r(\mathcal{T}_n(P))}{H(X|Y)-\delta}.
\end{eqnarray*}

Each $\mathcal{T}_n(P,i)$ can be viewed as a constant composition
code of type $P$ and we have
\begin{eqnarray*}
\mbox{Pr}\{\widehat{X}^n=X^n|X^n\in\mathcal{T}_n(P,i)\}=
P_c(\mathcal{T}_n(P,i),P_{Y|X}).
\end{eqnarray*}
Note that for
$P\in\mathcal{U}(\epsilon)\cap\mathcal{P}_n(\mathcal{X})$ and
$i\in\mathcal{I}_n(P,\delta)$,
\begin{eqnarray*}
\frac{1}{n}\log|\mathcal{T}_n(P,i)|&\geq&\frac{1}{n}\log|\mathcal{T}_n(P)|-H(X|Y)+\delta\\
&\geq& H(P)-H(X|Y)+\delta-|\mathcal{X}|\frac{\log(n+1)}{n}.
\end{eqnarray*}
Therefore, it follows from \cite[Lemma 5]{DK79} that
\begin{eqnarray*}
-\frac{1}{n}\log
P_c(\mathcal{T}_n(P,i),P_{Y|X})\geq\min\limits_{Q_X\in\mathcal{U}(\epsilon)}E^c(Q_X,P_{Y|X},H(Q_X)-H(X|Y)+\delta-\epsilon)-\epsilon
\end{eqnarray*}
uniformly for all
$P\in\mathcal{U}(\epsilon)\cap\mathcal{P}_n(\mathcal{X})$ and
$i\in\mathcal{I}_n(P,\delta)$ when $n$ is sufficiently large. In
view of the fact that $E^c(P_X,P_{Y|X},I(P_X,P_{Y|X})+\delta)>0$
and that $E^c(Q_X,P_{Y|X},R)$ as a function of the pair $(Q_X,R)$
is uniformly equicontinuous, we have
\begin{eqnarray*}
\min\limits_{Q_X\in\mathcal{U}(\epsilon)}\left[E^c(Q_X,P_{Y|X},H(Q_X)-H(X|Y)+\delta-\epsilon)-\epsilon\right]\triangleq\kappa_2>0
\end{eqnarray*}
for sufficiently small $\epsilon$.

Now it is easy to see that
\begin{eqnarray*}
&&\liminf\limits_{n\rightarrow\infty}P_e(\varphi_n,P_{XY})\\
&\geq&\liminf\limits_{n\rightarrow\infty}\sum\limits_{P\in\mathcal{U}(\epsilon)\cap\mathcal{P}_n(\mathcal{X})}\mbox{Pr}\{X^n\in\mathcal{T}_n(P)\}\sum\limits_{i\in\mathcal{I}_n(P,\delta)}\frac{|\mathcal{T}_n(P,i)|}{|\mathcal{T}_n(P)|}\left(1-\mbox{Pr}\{\widehat{X}^n=X^n|X^n\in\mathcal{T}_n(P,i)\}\right)\\
&\geq&\liminf\limits_{n\rightarrow\infty}\sum\limits_{P\in\mathcal{U}(\epsilon)\cap\mathcal{P}_n(\mathcal{X})}\mbox{Pr}\{X^n\in\mathcal{T}_n(P)\}\sum\limits_{i\in\mathcal{I}_n(P,\delta)}\frac{|\mathcal{T}_n(P,i)|}{|\mathcal{T}_n(P)|}(1-e^{-n\kappa_2})\\
&\geq&\liminf\limits_{n\rightarrow\infty}\sum\limits_{P\in\mathcal{U}(\epsilon)\cap\mathcal{P}_n(\mathcal{X})}\mbox{Pr}\{X^n\in\mathcal{T}_n(P)\}\left(1-\frac{r(\mathcal{T}_n(P))}{H(X|Y)-\delta}\right)(1-e^{-n\kappa_2})\\
&\geq&\liminf\limits_{n\rightarrow\infty}\sum\limits_{P\in\mathcal{U}(\epsilon)\cap\mathcal{P}_n(\mathcal{X})}\mbox{Pr}\{X^n\in\mathcal{T}_n(P)\}(1-e^{-n\kappa_2})\\
&&-\limsup\limits_{n\rightarrow\infty}\sum\limits_{P\in\mathcal{U}(\epsilon)\cap\mathcal{P}_n(\mathcal{X})}\mbox{Pr}\{X^n\in\mathcal{T}_n(P)\}\frac{r(\mathcal{T}_n(P))}{H(X|Y)-\delta}(1-e^{-n\kappa_2})\\
&\geq&\liminf\limits_{n\rightarrow\infty}\sum\limits_{P\in\mathcal{U}(\epsilon)\cap\mathcal{P}_n(\mathcal{X})}\mbox{Pr}\{X^n\in\mathcal{T}_n(P)\}(1-e^{-n\kappa_2})\\
&&-\limsup\limits_{n\rightarrow\infty}\sum\limits_{P\in\mathcal{P}_n(\mathcal{X})}\mbox{Pr}\{X^n\in\mathcal{T}_n(P)\}\frac{r(\mathcal{T}_n(P))}{H(X|Y)-\delta}(1-e^{-n\kappa_2})\\
&=&\liminf\limits_{n\rightarrow\infty}\sum\limits_{P\in\mathcal{U}(\epsilon)\cap\mathcal{P}_n(\mathcal{X})}\mbox{Pr}\{X^n\in\mathcal{T}_n(P)\}(1-e^{-n\kappa_2})\\
&&-\limsup\limits_{n\rightarrow\infty}\frac{R(\varphi_n,P_{XY})}{H(X|Y)-\delta}(1-e^{-n\kappa_2})\\
&=&1-\frac{R+\delta}{H(X|Y)-\delta},
\end{eqnarray*}
which implies
\begin{eqnarray*}
\limsup\limits_{n\rightarrow\infty}P_c(\varphi_n,P_{XY})\leq\frac{R+\delta}{H(X|Y)-\delta}.
\end{eqnarray*}
Therefore, we have
\begin{eqnarray*}
P^{\max}_{c,v}(P_{XY},R)-\delta\leq\frac{R+\delta}{H(X|Y)-\delta}.
\end{eqnarray*}
Since $\delta>0$ is arbitrary, this completes the proof.
\end{proof}

\section{Example}\label{example}

Consider the joint distribution $P_{XY}$ over $\mathbb{Z}_2\times
\mathbb{Z}_2$ with $P_{X|Y}(1|0)=P_{X|Y}(0|1)=p$ and
$P_Y(0)=\tau$. We assume $p\in(0,\frac{1}{2})$,
$\tau\in(0,\frac{1}{2}]$. It is easy to compute that
\begin{eqnarray*}
&&P_{X}(0)=1-P_{X}(1)=\tau(1-p)+(1-\tau)p,\\
&&P_{Y|X}(1|0)=1-P_{Y|X}(0|0)=\frac{(1-\tau)p}{\tau(1-p)+(1-\tau)p},\\
&&P_{Y|X}(0|1)=1-P_{Y|X}(1|1)=\frac{\tau p}{\tau
p+(1-\tau)(1-p))}.
\end{eqnarray*}

For this joint distribution, we have $H(X|Y)=H_b(p)$, where
$H_b(\cdot)$ is the binary entropy function (i.e., $H_b(p)=-p\log
p-(1-p)\log(1-p)$). Given $R\in[0,\log 2]$, let $q$ be the unique
number satisfying $H_b(q)=R$ and $q\leq\frac{1}{2}$. It can be
verified that
\begin{eqnarray*}
&&E_{f,sp}(P_{XY},R)=D(q\|p),\quad R\in[H_b(p),\log 2],\\
&&E^c_f(P_{XY},R)=D(q\|p),\quad R\in[0,H_b(p)].
\end{eqnarray*}
Note that
\begin{eqnarray*}
E_{ex}(Q_{X},P_{Y|X},0)&=&-\sum\limits_{x,x'}Q_{X}(X)Q_{X}(x')\log\left[\sum\limits_{y}\sqrt{P_{Y|X}(y|x)P_{Y|X}(y|x')}\right]\\
&=&-2Q_{X}(0)Q_{X}(1)\log\left[\sum\limits_{y}\sqrt{P_{Y|X}(y|0)P_{Y|X}(y|1)}\right]
\end{eqnarray*}
which is a concave function of $Q_{X}$. Therefore,
\begin{eqnarray*}
E^*_{ex}(P_X,P_{Y|X},0)=E_{ex}(P_{X},P_{Y|X},0).
\end{eqnarray*}
Moreover, we have
\begin{eqnarray*}
E_{ex}(P_{Y|X},0)&=&\max\limits_{Q_{X}}E_{ex}(Q_{X},P_{Y|X},0)\\
&=&-\frac{1}{2}\log\left[\sum\limits_{y}\sqrt{P_{Y|X}(y|0)P_{Y|X}(y|1)}\right].
\end{eqnarray*}
It is easy to show that
\begin{eqnarray*}
E_{v,sp}(P_{XY},H(P_X))&=&E_{sp}(P_{X},P_{X|Y},0)\\
&=&\min\limits_{Q_{Y}}\sum\limits_{x}P_{X}(x)\sum\limits_{y}Q_{Y}(y)\log\frac{Q_{Y}(y)}{P_{Y|X}(y|x)}
\end{eqnarray*}
where the minimizer $Q^*_{Y}$ is given by
\begin{eqnarray*}
Q^*_{Y}(y)=\frac{\prod_xP_{Y|X}(y|x)^{P_{X}(x)}}{\sum_{y'}\prod_xP_{Y|X}(y'|x)^{P_{X}(x)}},\quad
y\in\mathcal{Y}.
\end{eqnarray*}
Define
\begin{eqnarray*}
&&E_{f,er}(P_{XY},R)=\max\{E_{f,ex}(P_{XY},R),
E_{f,rc}(P_{XY},R)\},\\
&&E_{v,er}(P_{XY},R)=\max\{E_{v,ex}(P_{XY},R),
E_{v,rc}(P_{XY},R)\}.
\end{eqnarray*}
We have
\begin{eqnarray*}
&&E_{f}(P_{XY},R)\geq E_{f,er}(P_{XY},R),\\
&&E_{v}(P_{XY},R)\geq E_{v,er}(P_{XY},R).
\end{eqnarray*}

\begin{figure}
\centering
\includegraphics[scale=0.9]{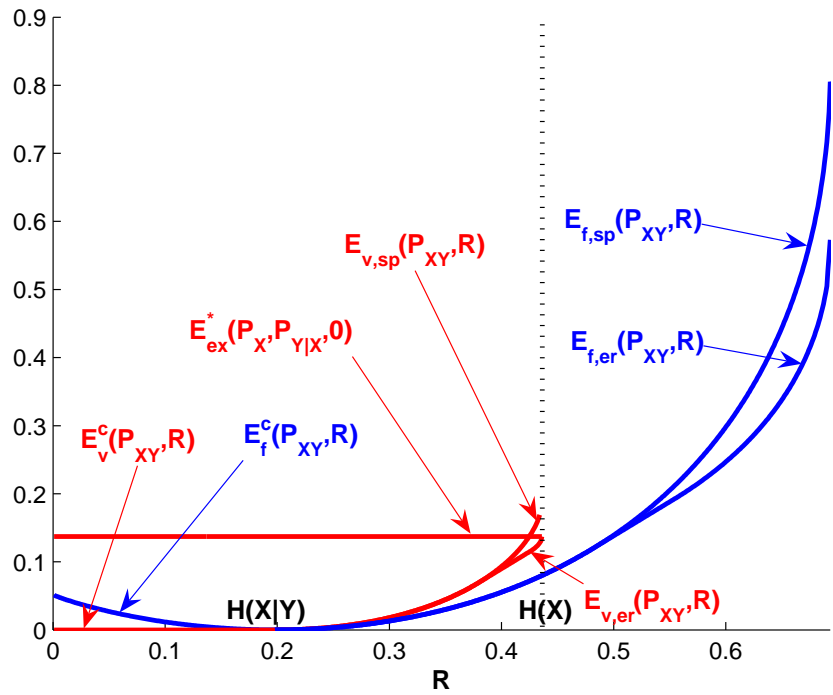}
\caption{$p=0.05$, $\tau=0.12$}\label{tau12}
\end{figure}

\begin{figure}
\centering
\includegraphics[scale=0.9]{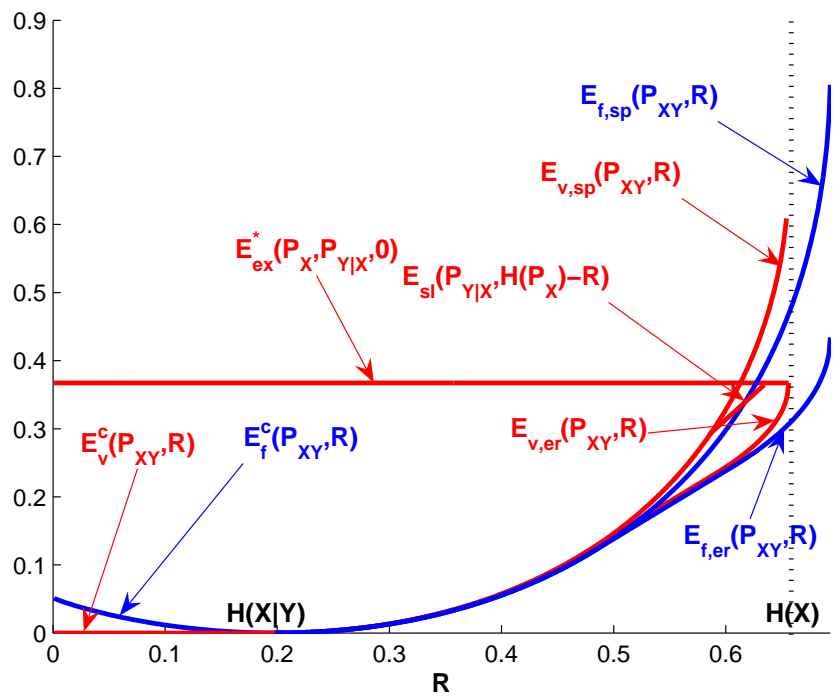}
\caption{$p=0.05$, $\tau=0.35$}\label{tau35}
\end{figure}

\begin{figure}
\centering
\includegraphics[scale=0.9]{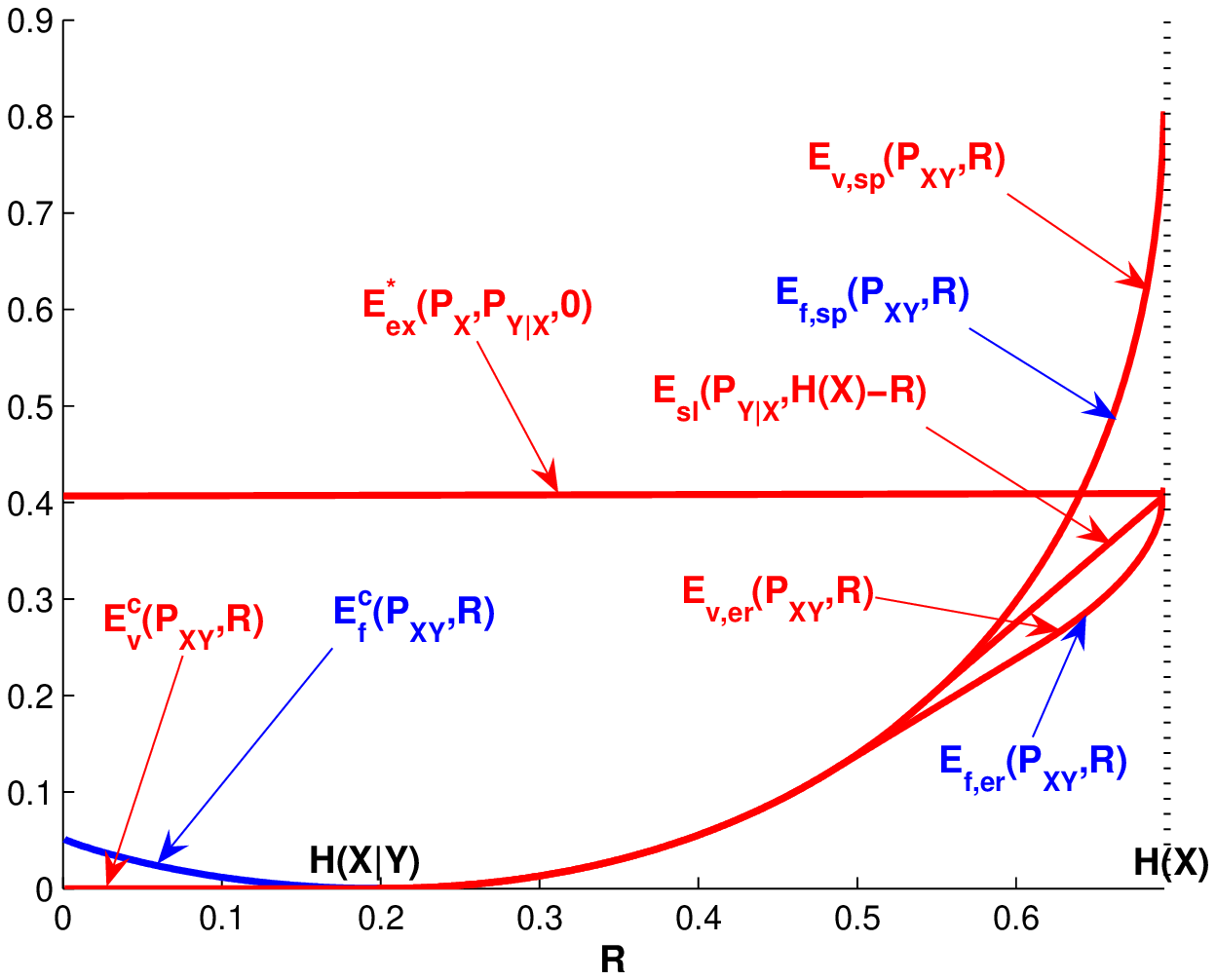}
\caption{$p=0.05$, $\tau=0.50$}\label{tau50}
\end{figure}

It can be seen from Fig. \ref{tau12} that the achievable error
exponent $E_{v,er}(P_{XY},R)$ of variable-rate Slepian-Wolf coding
can completely dominate the sphere packing exponent
$E_{f,sp}(P_{XY},R)$ of fixed-rate Slepian-Wolf coding. The gain
of variable-rate coding gradually diminishes as
$\tau\rightarrow\frac{1}{2}$ (see Fig. \ref{tau35} and Fig.
\ref{tau50}).

\section{Concluding Remarks}\label{conclusion}

We have studied the reliability function of variable-rate
Slepian-Wolf coding. An intimate connection between variable-rate
Slepian-Wolf codes and constant composition codes has been
revealed. It is shown that variable-rate Slepian-Wolf coding can
outperform fixed-rate Slepian-Wolf coding in terms of rate-error
tradeoff. Finally, we would like to mention that Theorem \ref{theorem1} has been generalized by Weinberger and Merhav in their recent paper on
the optimal tradeoff between the error exponent and the excess-rate exponent of variable-rate Slepian-Wolf coding \cite{WM15}.

\appendices
\section{Proof of Proposition \ref{prop:cr}}\label{app:prop:cr}
In view of (\ref{constrc}) and (\ref{rcsp}), we have
$R_{cr}(Q_X,W_{Y|X})=I(Q_X,W_{Y|X})$ if and only if the minimum of
the convex optimization problem
\begin{eqnarray}
\min\limits_{V_{Y|X}}D(V_{Y|X}\|W_{Y|X}|Q_X)+I(Q_X,V_{Y|X})
\label{prop:minimum}
\end{eqnarray}
is achieved at $V_{Y|X}=W_{Y|X}$. Let $V^*_{Y|X}$ be a minimizer
to the above optimization problem. Note that for $x,y$ such that
$Q_X(x)W_{Y|X}(y|x)=0$, there is no loss of generality in setting
$V^*_{Y|X}(y|x)=W_{Y|X}(y|x)$. Let
$\mathcal{A}=\{x\in\mathcal{X}:Q_X(x)>0\}$ and
$\mathcal{B}_x=\{y\in\mathcal{Y}:W_{Y|X}(y|x)>0\}$ for
$x\in\mathcal{A}$. We can rewrite (\ref{prop:minimum}) in the
following equivalent form:
\begin{eqnarray*}
\min\limits_{V_{Y|X}(y|x):x\in\mathcal{A},y\in\mathcal{B}_x}\sum\limits_{x\in\mathcal{A},y\in\mathcal{B}_x}Q_X(x)V_{Y|X}(y|x)\log\frac{V^2_{Y|X}(y|x)}{W_{Y|X}(y|x)\sum_{x'\in\mathcal{A}}Q_X(x')V_{Y|X}(y|x')}
\end{eqnarray*}
subject to
\begin{eqnarray*}
&&V_{Y|X}(y|x)\geq 0\quad\mbox{for all }x\in\mathcal{A},
y\in\mathcal{B}_x,\\
&&\sum\limits_{y\in\mathcal{B}_x}V_{Y|X}(y|x)=1\quad\mbox{for all
}x\in\mathcal{A}.
\end{eqnarray*}
Define
\begin{eqnarray*}
G'&=&\sum\limits_{x\in\mathcal{A},y\in\mathcal{B}_x}Q_X(x)V_{Y|X}(y|x)\log\frac{V^2_{Y|X}(y|x)}{W_{Y|X}(y|x)\sum_{x'\in\mathcal{A}}Q_X(x')V_{Y|X}(y|x')}\\
&&-\sum\limits_{x\in\mathcal{A},y\in\mathcal{B}_x}\alpha(x,y)-\sum\limits_{x\in\mathcal{A},y\in\mathcal{B}_x}\beta(x)V_{Y|X}(y|x),
\end{eqnarray*}
where $\alpha(x,y)\in\mathbb{R}_+$
($x\in\mathcal{A},y\in\mathcal{B}_x$) and $\beta(x)\in\mathbb{R}$
($x\in\mathcal{A}$). The Karush-Kuhn-Tucker conditions yield
\begin{eqnarray*}
&&\left.\frac{\partial G'}{\partial
V_{Y|X}(y^*|x^*)}\right|_{V_{Y|X}(y^*|x^*)=V^*_{Y|X}(y^*|x^*)}=2Q_X(x^*)\log
V^*_{Y|X}(y^*|x^*)+Q_X(x^*)-Q_X(x^*)\log
W_{Y|X}(y^*|x^*)\\
&&\hspace{2.42in}-Q_X(x^*)\log\sum_{x'\in\mathcal{A}}Q_X(x')V^*_{Y|X}(y^*|x')-\alpha(x^*,y^*)-\beta(x^*)\\
&&\hspace{2.28in}=0\quad\mbox{for all
}x^*\in\mathcal{A},y^*\in\mathcal{B}_{x^*},\\
&&V^*_{Y|X}(y^*|x^*)\geq 0\quad\mbox{for all }x^*\in\mathcal{A},
y^*\in\mathcal{B}_{x^*},\\
&&\sum\limits_{y^*\in\mathcal{B}_{x^*}}V^*_{Y|X}(y^*|x^*)=1\quad\mbox{for
all
}x^*\in\mathcal{A},\\
&&\alpha(x^*,y^*)V^*_{Y|X}(y^*|x^*)=0\quad\mbox{for all
}x^*\in\mathcal{A},y^*\in\mathcal{B}_{x^*}.
\end{eqnarray*}
By the complementary slackness conditions (i.e.,
$V^*_{Y|X}(y^*|x^*)>0\Rightarrow\alpha(x^*,y^*)=0$), we have
$V^*_{Y|X}=W_{Y|X}$ if and only if for all $x^*\in\mathcal{A}$,
$y^*\in\mathcal{B}_{x^*}$,
\begin{eqnarray*}
Q_X(x^*)\log
W_{Y|X}(y^*|x^*)+Q_X(x^*)-Q_X(x^*)\log\sum_{x'\in\mathcal{A}}Q_X(x')W_{Y|X}(y^*|x')-\beta(x^*)=0,
\end{eqnarray*}
i.e., the value of
\begin{eqnarray*}
\log\frac{W_{Y|X}(y|x)}{\sum_{x'}Q_X(x')W_{Y|X}(y|x')}
\end{eqnarray*}
does not depend on $y$ for all $x, y$ such that
$Q_X(x)W_{Y|X}(y|x)>0$.

\section{Proof of Proposition \ref{prop:constantbounds}}\label{app:prop:constantbounds}
\begin{enumerate}
\item It is known \cite[Exercise 5.17]{CK81B} that for every $R>0$,
$\delta>0$ and every $P\in\mathcal{P}_n(\mathcal{X})$ there exists
a constant composition code
$\mathcal{C}_n\subseteq\mathcal{T}_n(P)$ such that
\begin{eqnarray*}
&&R(\mathcal{C}_n)\geq R-\delta,\\
&&-\frac{1}{n}\log P_{e,\max}(\mathcal{C}_n,W_{Y|X})\geq
E_{ex}(P,W_{Y|X},R)-\delta
\end{eqnarray*}
whenever $n\geq n_0(|\mathcal{X}|,|\mathcal{Y}|,\delta)$. Let
$P_n$ be a sequence of types with
$P_n\in\mathcal{P}_n(\mathcal{X})$ and
\begin{eqnarray*}
\lim\limits_{n\rightarrow\infty}\|P_n-Q_X\|=0.
\end{eqnarray*}
Define
\begin{eqnarray*}
V^*_n=\arg\min\left[\sum\limits_{x,\widetilde{x}}P_n(x)V_n(\widetilde{x}|x)d_{W_{Y|X}}(x,\widetilde{x})+I(P_n,V_n)-R\right]
\end{eqnarray*}
where the minimization is over
$V_n:\mathcal{X}\rightarrow\mathcal{X}$ subject to the constraints
\begin{eqnarray*}
&&\sum\limits_{x}P_n(x)V_n(\widetilde{x}|x)=P_n(\widetilde{x}),\quad\mbox{for
all }\widetilde{x}\in\mathcal{X},\\
&&I(P_n,V_n)\leq R.
\end{eqnarray*}
Note that $\{V^*_n\}$ must contain a converging subsequence
$\{V^*_{n_k}\}$. Define
\begin{eqnarray*}
V^*=\lim\limits_{n\rightarrow\infty}V^*_{n_k}.
\end{eqnarray*}
It is easy to verify that
\begin{eqnarray*}
\sum\limits_{x\in\mathcal{X}}Q_X(x)V^*(\widetilde{x}|x)&=&\lim\limits_{k\rightarrow\infty}\sum\limits_{x\in\mathcal{X}}P_{n_k}(x)V^*_{n_k}(\widetilde{x}|x)\\
&=&\lim\limits_{k\rightarrow\infty}P_n(\widetilde{x})\\
&=&Q_X(\widetilde{x}),\quad\mbox{for
all }\widetilde{x}\in\mathcal{X},\\
I(Q_X,V^*)&=&\lim\limits_{k\rightarrow\infty}I(P_{n_k},V^*_{n_k})\\
&\leq&R.
\end{eqnarray*}
Therefore, we have
\begin{eqnarray*}
&&\limsup\limits_{n\rightarrow\infty}E_{ex}(P_n,W_{Y|X},R)\\
&\geq&\limsup\limits_{k\rightarrow\infty}E_{ex}(P_{n_k},W_{Y|X},R)\\
&=&\limsup\limits_{k\rightarrow\infty}\sum\limits_{x,\widetilde{x}\in\mathcal{X}}P_{n_k}(x)V^*_{n_k}(\widetilde{x}|x)d_{W_{Y|X}}(x,\widetilde{x})+I(P_{n_k},V_{n_k})-R\\
&\geq&\sum\limits_{x,\widetilde{x}\in\mathcal{X}}Q_X(x)V^*(\widetilde{x}|x)d_{P_{Y|X}}(x,\widetilde{x})+I(Q_X,V^*)-R\\
&\geq&E_{ex}(Q_X,W_{Y|X},R).
\end{eqnarray*}

It is also known \cite[Theorem 5.2]{CK81B} that for every $R>0$,
$\delta>0$ and every $P\in\mathcal{P}_n(\mathcal{X})$ there exists
a constant composition code
$\mathcal{C}_n\subseteq\mathcal{T}_n(P)$ such that
\begin{eqnarray*}
&&R(\mathcal{C}_n)\geq R-\delta,\\
&&-\frac{1}{n}\log P_{e,\max}(\mathcal{C}_n,W_{Y|X})\geq
E_{rc}(P,W_{Y|X},R)-\delta
\end{eqnarray*}
whenever $n\geq n_0(|\mathcal{X}|,|\mathcal{Y}|,\delta)$. So it
can be readily shown that
\begin{eqnarray*}
E(Q_X,W_{Y|X},R)\geq E_{rc}(Q_X,W_{Y|X},R)
\end{eqnarray*}
by invoking the fact that $E_{rc}(P,W_{Y|X},R)$ as a function of
the pair $(P,R)$ is uniformly equicontinuous \cite[Lemma
5.5]{CK81B}. The proof is complete.

\item By Definition \ref{def:constantcomposition}, for every $R>0$,
$\delta>0$  there exists a sequence of block channel codes codes
$\{\mathcal{C}_n\}$ with
$\mathcal{C}_n\subseteq\mathcal{T}_n(P_n)$ for some
$P_n\in\mathcal{P}_n(\mathcal{X})$ such that
\begin{eqnarray}
&&\lim\limits_{n\rightarrow\infty}\|P_n-Q_X\|=0,\nonumber\\
&&\liminf\limits_{n\rightarrow\infty}R(\mathcal{C}_n)\geq R-\delta,\nonumber\\
&&\limsup\limits_{n\rightarrow\infty}-\frac{1}{n}\log
P_{e,\max}(\mathcal{C}_n,W_{Y|X})\geq
E(Q_X,W_{Y|X},R)-\delta.\label{prop1:reliability}
\end{eqnarray}
For simplicity, we assume $R(\mathcal{C}_n)\geq R-\delta$ for all
$n$. Now it follows from Theorem 5.3 in [8] that
\begin{eqnarray}
-\frac{1}{n}\log\left[2P_{e,\max}(\mathcal{C}_n,W_{Y|X})\right]\leq
E_{sp}(P_n,W_{Y|X},R-2\delta)(1+\delta)\label{prop1:spbound}
\end{eqnarray}
whenever $n\geq n_0(|\mathcal{X}|,|\mathcal{Y}|,\delta)$. Let
\begin{eqnarray*}
V^*_{Y|X}=\arg\min\limits_{V_{Y|X}:I(Q_X,V_{Y|X})\leq
R-3\delta}D(V_{Y|X}\|W_{Y|X}|Q_X).
\end{eqnarray*}
Without loss of generality, we can set
$V^*_{Y|X}(\cdot|x)=W_{Y|X}(\cdot|x)$ for all
$x\in\{x'\in\mathcal{X}:Q_X(x')=0\}$. It is easy to see that there
exists an $\epsilon>0$ such that
\begin{eqnarray*}
&&I(P,V^*_{Y|X})\leq R-2\delta,\\
&&D(V^*_{Y|X}\|W_{Y|X}|P)\leq D(V^*_{Y|X}\|W_{Y|X}|Q_X)+\delta
\end{eqnarray*}
for all $P\in\mathcal{P}(\mathcal{X})$ with
$\|P-Q_X\|\leq\epsilon$. Therefore, for all sufficiently large
$n$,
\begin{eqnarray}
E_{sp}(P_n,W_{Y|X},R-2\delta)&=&\min\limits_{V_{Y|X}:I(P_n,V_{Y|X})\leq
R-3\delta}D(V_{Y|X}\|W_{Y|X}|P_n)\nonumber\\
&\leq&D(V^*_{Y|X}\|W_{Y|X}|P_n)\nonumber\\
&\leq& D(V^*_{Y|X}\|W_{Y|X}|Q_X)+\delta\nonumber\\
&=& E_{sp}(Q_X,W_{Y|X},R-3\delta)+\delta.\label{prop1:sp}
\end{eqnarray}
Combining (\ref{prop1:reliability}), (\ref{prop1:spbound}) and
(\ref{prop1:sp}), we get
\begin{eqnarray*}
E(Q_X,W_{Y|X},R)-\delta\leq
[E_{sp}(Q_X,W_{Y|X},R-3\delta)+\delta](1+\delta).
\end{eqnarray*}
In view of the fact that $\delta>0$ is arbitrary and that for
fixed $P$ and $W_{Y|X}$, $E_{sp}(P,W_{Y|X},R)$ is a decreasing
continuous convex function of $R$ in the interval where it is
finite \cite[Lemma 5.4]{CK81B}, the proof is complete.

\item It is known \cite[Lemma 5]{DK79} that for every $R>0$,
$\delta>0$, every constant composition code $\mathcal{C}_n$ of
common type $P$ for some $P\in\mathcal{P}_n(\mathcal{X})$ and rate
$R(\mathcal{C}_n)\geq R+\delta$ has
\begin{eqnarray*}
-\frac{1}{n}\log P_c(\mathcal{C}_n,W_{Y|X})\geq
\min\limits_{V_{Y|X}}\left[D(V_{Y|X}\|W_{Y|X}|P)+|R-I(P,V_{Y|X})|^+\right]-\delta
\end{eqnarray*}
whenever $n\geq n_0(|\mathcal{X}|,|\mathcal{Y}|,\delta)$.
Moreover, it is also known \cite[Lemma 2]{DK79}\cite[Excercise
5.16]{CK81B} that for every $R>0$, $\delta>0$ and every
$P\in\mathcal{P}_n(\mathcal{X})$ there exists a constant
composition code $\mathcal{C}_n\subseteq\mathcal{T}_n(P)$ such
that
\begin{eqnarray*}
&&R(\mathcal{C}_n)\geq R-\delta,\\
&&-\frac{1}{n}\log P_c(\mathcal{C}_n,W_{Y|X})\geq
\min\limits_{V_{Y|X}}\left[D(V_{Y|X}\|W_{Y|X}|P)+|R-I(P,V_{Y|X})|^+\right]+\delta
\end{eqnarray*}
whenever $n\leq n_0(|\mathcal{X}|,|\mathcal{Y}|,\delta)$. In view
of the fact that
$\min_{V_{Y|X}}\left[D(V_{Y|X}\|W_{Y|X}|P)+|R-I(P,V_{Y|X})|^+\right]$
as a function of the pair $(P,R)$ is uniformly equicontinuous, it
can be readily shown that
\begin{eqnarray*}
E_c(Q_X,W_{Y|X},R)=\min\limits_{V_{Y|X}}\left[D(V_{Y|X}\|W_{Y|X}|Q_X)+|R-I(P,V_{Y|X})|^+\right].
\end{eqnarray*}
The proof is complete.
\end{enumerate}

\end{document}